\newcommand{\calf}{\mathcal{F}}
\newcommand{\Meet}{Meet separability}
\newcommand{\Joint}{Join separability}
\newcommand{\VG}{\Phi}
\newcommand{\vg}{\phi}
\newcommand{\VC}{\mathcal{C}}
\newcommand{\vc}{c}
\newcommand{\VH}{\Psi}
\newcommand{\vh}{\psi}
\newcommand{\VD}{\mathcal{D}}
\newcommand{\vd}{d}
\newcommand{\vt}{\tau}
\newcommand{\F}{\mathcal{F}}
\newcommand{\meetfosd}{\wedge_{\textsc{fosd}}}
\newcommand{\joinfosd}{\vee_{\textsc{fosd}}}
\newcommand{\geqfosd}{\geq_{\textsc{fosd}}}
\newcommand{\leqfosd}{\leq_{\textsc{fosd}}}
\newcommand{\maqs}{maximal adjusted quantiles}
\newcommand{\maq}{maximal adjusted quantile}
\newcommand{\minaqs}{minimal adjusted quantiles}
\newtheorem{theorem}{Theorem}
\newtheorem{corollary}{Corollary}
\newtheorem{lemma}{Lemma}
\begin{document}

\title{Multiple Adjusted Quantiles}
\author{Christopher P. Chambers and Alan D. Miller\thanks{\textbf{Chambers:} Department of Economics, Georgetown University, ICC 580, 37th and O Streets NW, Washington, D.C. 20057.  Email: Christopher.Chambers@georgetown.edu.  Web: http://chambers.georgetown.domains/. \textbf{Miller:} Faculty of Law, Western University, 1151 Richmond Street, London, Ontario N6A 3K7, Canada. Email: alan.miller@uwo.ca. Web: http://alandmiller.com/. This research was undertaken, in part, thanks to funding from the Canada Research Chairs program. Full acknowledgments to come in the final version.}}

\maketitle
\begin{abstract}We cardinally and ordinally rank distribution functions (CDFs).  We present a new class of statistics, \maqs{}, and show that a statistic is invariant with respect to cardinal shifts, preserves least upper bounds with respect to the first order stochastic dominance relation, and is lower semicontinuous if and only if it is a \maq{}.  A dual result is provided, as are ordinal results.  Preservation of least upper bounds is given several interpretations, including one that relates to changes in tax brackets, and one that relates to valuing options composed of two assets.
\end{abstract}

\section{Introduction}

Our goal is to rank real-valued CDFs cardinally, in a manner to be explained.  We do so through the use of what we call a statistic, where a statistic associates with every real-valued CDF a real number.  Our primary interest is in a property that states that, given two CDFs, the statistic of the smallest CDF which first order stochastically dominates each of them (the ``join'') coincides with the statistic of the better of the two original CDFs.  Dually, we also sometimes ask that, given two CDFs, the statistic of the largest CDF first order stochastically dominated by both of them coincides with the statistic of the worse of the two original CDFs.

We also consider a property that requires these statistics to commute with respect to cardinal shifts. We use these properties to characterize the \maqs{}, a new class of statistics that generalizes of the family of (lower) quantiles on the distribution functions.  This class of statistics has applications in a wide range of contexts, including academic grading, finance, health, and inequality.

Consider the problem of setting a tax bracket; that is, the level at which one's income becomes subject to a particular tax rate. In this context, our statistic takes as input an income distribution, and outputs the cutoff for the tax bracket. Our first axiom, \emph{meet separability}, requires the statistic to commute with respect to the meet operation on CDFs; that is, given two CDFs, the greatest CDF first order stochastically dominated by both of them must be equivalent to the lesser of the original two CDFs. To understand this axiom, consider an income distribution $F$, where $\rho(F)$ represents the tax bracket cutoff. Suppose that, as a result of the tax/transfer system or because of some exogenous event, the income distribution changes to $G$, and tax bracket cutoff falls to $\rho(G)$, so that lower levels of income are now taxed at the given rate. Our axiom states that the fall in the tax bracket cutoff must have occurred because some people's incomes decreased, and not because of those who gained. That is, the winners are irrelevant; removing their gains (resulting in distribution $F \wedge G$) would not lead to a further drop in the tax bracket cutoff (so that $\rho(F \wedge G)=\rho(G)$).

A related axiom, \emph{join separability}, requires the statistic to commute with respect to the join operation on CDFs; that is, given two CDFs, the least CDF that first order stochastically dominates both of them must be equivalent to the better of the original two CDFs. Here, a change in the income distribution that results in an increase in the tax bracket cutoff (from $G$ to $F$) must be because of the people who became richer, and not because of those made worse off. Join separability requires the statistic to ignore the losers---removing their losses (resulting in distribution $F \vee G$) would not lead to a further increase in the tax bracket cutoff (so that $\rho(F \vee G)=\rho(F)$).

Together with a translation equivariance property and a weak continuity condition, we use these axioms to characterize two classes of statistics, which we call maximal (and minimal) adjusted quantiles. The policymaker would like to base the tax bracket cutoff on the 90th percentile, but believes that she can learn something from the 80th percentile as well. Based on past experience, the policymaker believes that a reasonable difference between these percentiles is \$5000 of annual income. She calculates both and adjusts the 80th percentile by adding \$5000. Join separability allows us to characterize the \maqs{}—that is, the maximum of these two numbers—while meet separability allows us to characterize \minaqs{}, where the cutoff is their minimum. Our example involves a choice between two quantiles, but she can use any adjustment scheme that she chooses, subject to a few technical constraints.

A second application comes from finance. Consider the problem of an investor who can invest in two comonotone assets, $F$ and $G$, with statistics $\rho(F)$ and $\rho(G)$ respectively. Suppose that $\rho(F) \geq \rho(G)$. Join separability requires the statistic to ignore the states of the world in which $F$ does worse than $G$. That is, the statistic must treat $F$ as being equivalent to a derivative asset whose return tracks the better of $F$ and $G$ in all states of the world; we may think of this as an \emph{option to choose} either $F$ or $G$ after the state is realized. Similarly, meet separability requires $G$ to be equivalent to the derivative composed of the lesser of $F$ and $G$.

Our framework does not allow us to track the correlation structure between $F$ and $G$.  Implicit in our story of join separability then is that we study a statistic $\rho$ which, whenever two assets are comonotone, commutes with respect to the introduction of an option to choose either of the two.  However, the statistic must be based on the marginal distributions of the two assets alone, ignoring their correlation structure.

Using these axioms, we generalize the ``value at risk,'' a statistic used by investors to compare portfolios according to their value at a pre-specified quantile, such as the 5th percentile. In practice, investors may care about both the 5th and 10th percentiles. If the difference is small, they may be more interested in the 5th percentile, knowing that they have some ability to hedge their risk; if the difference is large, they may be more interested in the 10th percentile. Consequently, in scoring a portfolio, they adjust the 5th percentile with a constant (perhaps, the cost of hedging). We use join separability to characterize the statistic that scores the portfolio according to the maximum of two numbers, while meet separability is used to characterize their minimum.

We provide two provide two representations of \maqs{}. The easiest to understand is as follows, ignoring some technical details.  Given any CDF and any number $\alpha$ in between zero and one, we can speak of the $\alpha$-lower quantile for the CDF.\footnote{We consider lower quantiles for much of the paper: generally the set of $\alpha$ quantiles can be a non-degenerate closed interval, the lower quantile is the lowest point in this interval.} Now, for a given CDF, quantiles are weakly ordered.  So we introduce a handicap function $c$.  The function $c$ operates on a subset of numbers in between zero and one, and for a given CDF and quantile $\alpha$, we subtract this value from the CDF's $\alpha$ quantile.  We can refer to this value as the adjusted $\alpha$-lower quantile; our statistic returns the \emph{maximal} adjusted $\alpha$-quantile.

In addition to our main characterization, we present two related theorems. First, we show that a dual result holds; that is, by replacing the join separability axiom with meet separability, and appropriately modifying our continuity axiom, we establish a characterization of the class of \minaqs{} described above. Second, we provide an ordinal analogue of our results.

\subsection{Related literature}

The join separability property that we employ has been used in economics since at least \citet{kreps1979} in the context of ranking menus of available options.\footnote{Kreps refers to the property as \emph{strategic rationality}. This property has been studied in other economic contexts, including group identification \citep{miller2008,CHO2020171},  information aggregation \citep{chambers2011, dimitrov2009,cmy:2020}, efficiency measurement \citep{hk:1998,chk:1999,cm:2014a}, and influence measurement \citep{cm:2014b, cm:2018}.
For recent work on menu choice, see also \citet{ik:2021}.}
Kreps uses the property to characterize rankings that can be represented as ``indirect utilities,'' in the sense that there is an underlying ranking of alternatives such that a menu is indifferent to its highest ranked alternative.  He shows that the ordinal content of this is that if a menu $A$ is deemed as good as menu $B$, then $A$ must be indifferent to the combined menu $A \cup B$.
 In our framework, the property is essentially identical to one postulated by \citet{cgr:2008}, and we interpret it in much the same way.

The connection with \citet{kreps1979} can be made more precise. If we associate with any given CDF the closure of its supergraph, it is immediate that the union of these supergraphs corresponds to the join with respect to first order stochastic dominance of the original CDFs.  With the right technical assumptions, one can establish a Krepsian-style representation over CDFs whenever the join property is satisfied.\footnote{Similar results can also be shown to result in the framework of \citet{cgr:2008}, when restricting the join axiom to comonotonic random variables.}  Rather than establish this representation here, we instead show that by adding a translation equivariance property, we single out one particular class of statistics.

\citet{cgr:2008} characterizes a class of ``generalized quantiles'' based on a concept called the Sugeno integral, after \citet{ms:1974}.  These authors impose versions of the meet axiom and the join axiom simultaneously.  The Sugeno integral is a technical concept, a type of integral relying on non-additive set functions called capacities.  In this sense, it parallels the development of the Choquet integral in decision theory \citep{schmeidler1989}.   Essentially the same class of generalized quantiles (or Sugeno integrals) are also studied in the mathematical finance literature under the term ``Lambda Value at Risk,'' or $\Lambda V @ R$ \citep[see][]{fmp:2014,bpr:2017,cp:2018}.

Interestingly, the use of the``Sugeno integral'' in economics seems to predate \cite{cgr:2008}. \citet{hm:1980} introduces a theory of strategy-proof mechanisms for determining a public choice in a single-peaked framework.  His Proposition 3 characterizes what are essentially Sugeno integrals (with the caveat that the capacity defining the integral need neither assign value $0$ to $\varnothing$ nor $1$ to the whole set) based on strategy-proofness alone.  

Returning to sets, \citet{hk:1998,chk:1999} and \citet{cm:2014a} show that in finite-dimensional Euclidean environments, imposing both the join property and the meet property on comprehensive sets results in a \emph{monotone path} of alternatives (as in \citet{tm:1980}); the more of the path included in a given set, the higher the rank of that set.  \citet{cm:2018}  show that the reason the ``path'' emerges is because of the domain restriction of ``comprehensivity.''  In Corollary 1 there, it is shown (in a finite environment) that for other partial orders with respect to which sets may be comprehensive, the path property of ranking is preserved.

It is easy to see how to relate this to \citet{kreps1979}:  the ``path'' corresponds to a Leontief style ranking on the supergraphs (which is implied by the join property) but also a Leontief style ranking on the subgraphs (which is implied by the meet property).  In fact, Leontief style preferences are the only ones for which maximization of a relation on a set correspond (roughly) to minimization on its complement.  Similarly, the Sugeno integral is based on a ``path,'' where the path is taken with respect to the subgraph of the cdf.

Quantiles have enjoyed recent interest in the decision theory literature, starting with \citet{manski1988ordinal,chambers2007ordinal,rostek2010quantile}.  More recently, \citet{dcg:2019a,dcg:2019b} investigate quantiles in dynamic environments.  Other recent and related studies in economics include \citet{long2021,de2022people,de2022static,de2019smoothed,de2019dynamic}.  Though there has been a recent resurgence of interest, quantiles (and their generalizations) have been studied in economics for many years. In finance, quantiles (under the term ``value at risk'') are used extensively as a measure of risk.  We also remark that the idea of understanding real-valued maps of random variables is also a major theme in modern economic theory; see \emph{e.g.} \citet{mu2021monotone}.

Our characterization shares some characteristics with the ``variational'' model of decision making; and indeed, many popular models in economics.  In this literature, and ignoring risk attitudes, a state space is posited and state-contingent maps are evaluated as follows.  Each probability over the states of the world is associated with a ``cost'' function, analogous to our handicap; this cost represents the decision maker's attitude toward ambiguity.  For each probability measure and each state-contingent map, the decision maker adds the cost to the expected value of the map under that probability.  The decision maker then evaluates a state-contingent map via its minimal value across all such cost-adjusted expectations.  The model was first characterized by \citet{maccheroni2006}, and the tools of this characterization have now become standard, appearing in many contexts in economics.  To this end, we provide a dual interpretation of our result which is similar in spirit, replacing the state-contingent map and probability measure with a CDF and a quantile level.  The dualization comes from simply replacing join separability by meet separability and adjusting a continuity condition.  Moreover, we also establish the ordinal content of this result, starting from a primitive binary relation over CDFs.  We imagine this could be taken further, by starting from a state-space model and a set of state-contingent maps, endogenously determining a probability measure and utility index (as in \citet{rostek2010quantile}) and then providing a full decision theoretic characterization of the induced quantile functions from ordinal behavior.

\section{Preliminaries}

A \textbf{distribution function} on $\mathbb{R}$ is a mapping $F:\mathbb{R}\rightarrow\mathbb{R}$ which is nondecreasing, upper semicontinuous, and satisfies $\{0,1\}\subset \overline{F(\mathbb{R})}\subseteq [0,1]$.  The set of distribution functions will be denoted $\calf$.

Distribution functions can be ranked according to first order stochastic dominance; that is, $G\geqfosd F$ if for all $x\in\mathbb{R}$, $G(x)\leq F(x)$.  The stochastic dominance relation forms a lattice on the collection of distribution functions.  In particular, $G\joinfosd F$ is given by the pointwise minimum of $G$ and $F$, while $G\meetfosd F$ is given by the pointwise maximum.

For random variables $X,Y$ on a probability space $\left(\Omega,\Sigma,p\right)$, we define $X\vee Y$ and $X\wedge Y$ to be the random variables such that $(X\vee Y)(\omega)=\max\{X(\omega),Y(\omega)\}$ and $(X\wedge Y)(\omega)=\min\{X(\omega),Y(\omega)\}$. For a random variable $X$ denote by $F_X$ the associated distribution function.

We now discuss a connection between $F_{X\vee Y}$ and $F_{X}\joinfosd F_{Y}$. 
We will say random variables $X,Y$ on a probability space $\left(\Omega,\Sigma,p\right)$ are \textbf{comonotonic }if
$$\left( X\left( \omega \right) -X\left( \omega ^{\prime }\right) \right)
\left( Y\left( \omega \right) -Y\left( \omega ^{\prime }\right) \right) \geq
0\text{.}$$
Comonotonicity requires that the two random variables move weakly in the
same direction almost everywhere. It is a notion of perfect ordinal correlation.  The following result establishes that the induced lattice operation on the class of distribution functions with respect to stochastic dominance behaves nicely with respect to the pointwise lattice operation on random variables.  In particular, if two random variables are comonotonic, then the lattice operations commute with respect to taking the distribution function.  Theorem~\ref{comonotonic} is well-known but we include it here for completeness.  It permits us to relate our work to that of \citet{cgr:2008}.

\begin{theorem}
\label{comonotonic} Let $\left( \Omega ,\Sigma ,p\right) $ be a probability
space, and let $X,Y$ be random variables. Then

$$F_{X\vee Y} \geqfosd  F_{X}\joinfosd  F_{Y}$$
and
$$F_{X} \meetfosd  F_{Y} \geqfosd  F_{X\wedge Y}\text{.}$$
Moreover, if $X$ and $Y$ are comonotonic, then $F_{X\vee Y}=F_{X}\joinfosd F_{Y}$ and $F_{X\wedge Y}=F_{X}\meetfosd F_{Y}$.
\end{theorem}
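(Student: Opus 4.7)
The plan is to reduce everything to elementary set-theoretic identities for the level sets of $X \vee Y$ and $X \wedge Y$, combined with the standard inequalities for probabilities of intersections and unions. The comonotonic case will then follow from showing that the relevant level sets are nested.

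First, for any real $x$, I would write down the pointwise identities
\[
\{X \vee Y \leq x\} = \{X \leq x\} \cap \{Y \leq x\}, \qquad \{X \wedge Y \leq x\} = \{X \leq x\} \cup \{Y \leq x\}.
\]
Taking probabilities and applying $p(A \cap B) \leq \min\{p(A),p(B)\}$ and $p(A \cup B) \geq \max\{p(A),p(B)\}$ immediately yields
$F_{X \vee Y}(x) \leq \min\{F_X(x),F_Y(x)\} = (F_X \joinfosd F_Y)(x)$ and $F_{X \wedge Y}(x) \geq \max\{F_X(x),F_Y(x)\} = (F_X \meetfosd F_Y)(x)$. Recalling that $G \geqfosd F$ means pointwise $G \leq F$, these are exactly the two asserted dominance relations.

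Next, for the comonotonic case, I would fix $x$ and prove that the sets $A = \{X \leq x\}$ and $B = \{Y \leq x\}$ are nested up to a null set. Suppose toward contradiction there exist $\omega \in A \setminus B$ and $\omega' \in B \setminus A$ in the comonotonic set; then $X(\omega) \leq x < X(\omega')$ while $Y(\omega') \leq x < Y(\omega)$, so $(X(\omega) - X(\omega'))(Y(\omega) - Y(\omega')) < 0$, contradicting the comonotonicity condition. Hence (modulo a null set) either $A \subseteq B$ or $B \subseteq A$, which gives $p(A \cap B) = \min\{p(A),p(B)\}$ and $p(A \cup B) = \max\{p(A),p(B)\}$. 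Plugging back into the identities above yields equality throughout, establishing $F_{X \vee Y} = F_X \joinfosd F_Y$ and $F_{X \wedge Y} = F_X \meetfosd F_Y$.

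The argument is essentially routine, with no real obstacle. The only point requiring mild care is the null-set qualification in comonotonicity: the pointwise contradiction argument above rules out the existence of a pair $(\omega,\omega')$ of \emph{generic} points lying in $A \setminus B$ and $B \setminus A$ respectively, so nestedness holds up to a $p$-null set, which suffices for the probability equalities.
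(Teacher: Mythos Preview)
Your proof is correct and takes essentially the same route as the paper: both obtain the general dominance from the elementary inclusions/identities for the level sets $\{X\vee Y\le x\}$ and $\{X\wedge Y\le x\}$, and both handle the comonotonic case by producing points $\omega,\omega'$ with $X(\omega)\le x<Y(\omega)$ and $Y(\omega')\le x<X(\omega')$, contradicting the comonotonicity inequality. One small note: since comonotonicity here is defined as a pointwise (not almost-sure) condition, the nestedness of $A=\{X\le x\}$ and $B=\{Y\le x\}$ holds exactly, so your null-set caveat is unnecessary.
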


The proof of Theorem \ref{comonotonic} is given in the appendix.

Clearly, the comonotonicity condition is not necessary for $F_{X\vee
Y}=F_{X}\joinfosd F_{Y}$. One way to see this is that any $X$ and $Y$ for
which for all $\omega$, $X\left( \omega \right) \geq Y\left( \omega \right) $ clearly
satisfies this equality. However, more complicated examples can be constructed.
The only point to be made here is that perfect ordinal dependence leads to the
preservation of joins (similarly, it leads to the preservation of
meets).

The following corollary demonstrates that for any two distribution
functions, there are real-valued random variables generating the distribution functions whose join
(option)\ generates the join of the distribution functions. Hence, there is
some loss of generality in working with distribution functions as opposed to
random variables, but there is also a tight relation between the
distribution function of an option and the join of the underlying
distribution functions.  This result is certainly well-known, but we state it here for completeness.

\begin{corollary}
\label{tightness}Let $F,G\in \mathcal{F}$. Then there exists a probability
space $\left( \Omega ,\Sigma ,p\right) $ and random variables $X$ and $Y$
for which $F=F_{X}$, $G=F_{Y}$ and $\left( F\meetfosd G\right) =F_{X\wedge
Y}$, $\left( F\joinfosd G\right) =F_{X\vee Y}$.
\end{corollary}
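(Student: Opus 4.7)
The plan is to use the classical quantile-transform construction: build $X$ and $Y$ as nondecreasing functions of a common uniform random variable, which makes them automatically comonotonic, and then invoke Theorem~\ref{comonotonic}. Concretely, I would take $(\Omega,\Sigma,p) = ([0,1], \mathcal{B}([0,1]), \lambda)$ with $\lambda$ Lebesgue measure, and define the lower quantile functions
\[
X(\omega) \;=\; \inf\{x \in \mathbb{R} : F(x) \geq \omega\}, \qquad Y(\omega) \;=\; \inf\{x \in \mathbb{R} : G(x) \geq \omega\}.
\]

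First I would verify that $X$ and $Y$ are well-defined real-valued random variables. Because every element of $\calf$ is nondecreasing with $\{0,1\}\subset\overline{F(\mathbb{R})}\subseteq[0,1]$, the defining set is, for each $\omega\in(0,1)$, both nonempty (as $\sup F = 1$) and bounded below (as $\inf F = 0$), so the infimum is a real number; values on the null set $\{0,1\}$ can be assigned arbitrarily. Monotonicity of $X$ in $\omega$ gives Borel measurability. Next I would verify the distributional identity $F_X = F$, and symmetrically $F_Y = G$. The key step is the equivalence
\[
X(\omega) \leq x \iff \omega \leq F(x),
\]
which relies on monotonicity and right-continuity of $F$; right-continuity is exactly what the upper semicontinuity of a nondecreasing real function provides. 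The equivalence then yields $F_X(x) = \lambda(\{\omega : X(\omega) \leq x\}) = F(x)$.

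Having established the marginals, I would observe that $X$ and $Y$ are comonotonic: both are nondecreasing functions of the same $\omega$, so $(X(\omega)-X(\omega'))(Y(\omega)-Y(\omega'))\geq 0$ for all $\omega,\omega' \in [0,1]$. The comonotone half of Theorem~\ref{comonotonic} then delivers
\[
F_{X\vee Y} \;=\; F_X \joinfosd F_Y \;=\; F \joinfosd G, \qquad F_{X\wedge Y} \;=\; F_X \meetfosd F_Y \;=\; F \meetfosd G,
\]
which is the conclusion.

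The only step doing any real work is the distributional identity $F_X=F$; this is the main obstacle, but it is a standard fact about quantile transforms and turns on nothing more than the right-continuity of $F$ (guaranteed here by upper semicontinuity plus monotonicity). Everything else is either definitional or an immediate appeal to Theorem~\ref{comonotonic}.
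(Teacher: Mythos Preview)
Your proposal is correct and is essentially identical to the paper's proof: the paper also takes the uniform probability space on the unit interval, defines $X=F^{-1}$ and $Y=G^{-1}$ via the same lower-quantile formula, verifies $F_X=F$ through the equivalence $F^{-1}(y)\leq z\iff y\leq F(z)$, observes that both are weakly increasing hence comonotonic, and then invokes Theorem~\ref{comonotonic}. The only cosmetic difference is that the paper works on $(0,1)$ rather than $[0,1]$, which sidesteps the need to assign values at the endpoints.
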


The proof of Corollary \ref{tightness} is given in the appendix.

\section{On the ranking of distribution functions}

We will call a map $\rho:\calf\rightarrow\mathbb{R}$ a \emph{statistic}.  For distribution functions $F$ over $\mathbb{R}$, there is a particularly notable statistic: the lower quantile function, where the $\alpha$ quantile for $0<\alpha<1$ is defined by
\[\rho_\alpha(F)=\inf\{x\in\mathbb{R}:F(x)\geq\alpha\}.\]

From here on out, we drop the adjective ``lower'' and simply refer to a quantile. 

We now introduce some properties that some statistics may satisfy.

The first two of these relate to the ``option'' interpretation of cdfs.  The first axiom can be interpreted as asking that, for comonotonic random variables, the option to choose one of a pair has the same value of the best of the underlying two random variables.  It was introduced in random variable form in \citet{cgr:2008}.

\begin{description} \item[\Joint:] For all $F,G\in\calf$, $\rho(F\joinfosd  G)=\max\{\rho(F),\rho(G)\}$.
\end{description}

We say that a sequence of distribution functions $\{F_n\}\subseteq\calf$ \textbf{weakly converges} to $F\in\calf$ if for all continuity points $x$ of $F$, $F_n(x)\rightarrow F(x)$.

\begin{description} \item[Lower semicontinuity:] Suppose that $\rho(F_n) \leq x^*$ for all $n$, and that $F_n \rightarrow F$.  Then $\rho(F) \leq x^*$.
\end{description}

Lastly, we define two axioms that require the statistic to commute with respect to cardinal transformations of the CDF. The first such axiom requires the statistic to require to arbitrary cardinal shifts, in which a constant value is added to each realization of the random variable.

\begin{description} \item[Translation equivariance:] For all $b\in\mathbb{R}$, if $g(x) = x+b$, then $\rho(F\circ g^{-1})=\rho(F)+b$.
\end{description}

The second such requires the statistic to commute with respect to affine transformations.

\begin{description} \item[Affine equivariance:] For all $a>0$ and  $b\in\mathbb{R}$, if $g(x)=ax+b$, then $\rho(F\circ g^{-1})=a\rho(F)+b$.
\end{description}

Our main result provides two equivalent characterizations of join separability, translation equivariance, and lower semicontinuity.

Let $\VG$ be the set of functions $\vg:\mathbb{R}\rightarrow[0,1)$ such that (a) $\vg$ is non-decreasing, (b) $\vg$ is right-continuous, (c) there exists $x\in\mathbb{R}$ such that $\vg(x)=0$, (d) there exists $x\in\mathbb{R}$ such that $\vg(x)>0$, and (e) $\lim_{x\rightarrow\infty}\vg(x)<1$.  Let $P^\VG$ be the set of statistics $\rho^\vg:\mathcal{F}\rightarrow \mathbb{R}$ such that for all $\rho^\vg\in P^\VG$, there exists $\vg\in\VG$ such that $\rho^\vg(F)=\min\{t:F(x)\geq \vg(x-t)\textnormal{ for all }x\}$.\footnote{Observe that the minimum in this expression exists; we will show that the set of $t$ for which $F(x) \geq \vg(x-t)$ for all $x$ is a closed set.  Suppose that $t^n\rightarrow t^*$, and $F(x) \geq \vg(x-t^n)$ for all $x$ and $n$.  Then $F(x+t^n) \geq \vg(x)$ for all $x$ and $n$.  Fixing $x$ and appealing to upper semicontinuity of $F$, we establish that $F(x+t^*)\geq \vg(x)$, or $F(x) \geq \vg(x-t^*)$ for all $x$, as desired.}

Let $\VC$ be the set of all non-decreasing, lower semicontinuous functions $\vc:(0,1)\rightarrow\mathbb{R}\cup\{\infty\}$ for which there exists $b\in(0,1)$ such that $c(b)=\infty$. Let $P^\VC$ be the set of functions $\rho^\vc$ such that for $\vc\in\VC$, $\rho^\vc(F)=\sup_{\alpha\in(0,1)}\left(\rho_{\alpha}(F)-c(\alpha)\right)$.\footnote{It will become clear that a function so-defined is always real valued.}

\begin{theorem}\label{3way}For a statistic $\rho :\mathcal{F}\rightarrow \mathbb{R}$, the following three statements are equivalent:
\begin{enumerate}
    \item $\rho$ satisfies join separability, translation equivariance, and lower semicontinuity.
    \item $\rho\in P^\VG$.
    \item $\rho\in P^\VC$
\end{enumerate}
\end{theorem}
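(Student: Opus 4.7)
The plan is to prove the cycle (2) $\Leftrightarrow$ (3), (2) $\Rightarrow$ (1), (1) $\Rightarrow$ (2). The first two parts will be mechanical verifications, and the main work is reconstructing a handicap $\vg$ (or $\vc$) from any statistic $\rho$ satisfying the three axioms.

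For (2) $\Leftrightarrow$ (3), I will use reciprocal generalized inverses: given $\vg \in \VG$ set $\vc(\alpha) = \inf\{y : \vg(y) \geq \alpha\}$, and given $\vc \in \VC$ set $\vg(y) = \sup\{\alpha \in (0,1) : \vc(\alpha) \leq y\}$. Each listed property of $\vg$ translates into the corresponding property of $\vc$---in particular $\lim_{y \to \infty} \vg(y) < 1$ corresponds to the existence of $b \in (0,1)$ with $\vc(b) = \infty$---and the identity that $F(x) \geq \vg(x-t)$ for every $x$ iff $\rho_\alpha(F) - \vc(\alpha) \leq t$ for every $\alpha$ makes the two representations coincide. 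For (2) $\Rightarrow$ (1), each axiom is direct: join separability follows because $F \joinfosd G$ is the pointwise minimum, so $\rho^\vg(F \joinfosd G)$ is the smallest $t$ simultaneously exceeding $\rho^\vg(F)$ and $\rho^\vg(G)$; translation equivariance is a change of variables inside the $\min$; and lower semicontinuity follows by passing the inequality $F_n(x) \geq \vg(x - x^*)$ to continuity points of $F$ and then to all $x$ using right-continuity of $F$ and $\vg$.

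For the main direction (1) $\Rightarrow$ (2), I first normalize so $\rho(\delta_0) = 0$ (which is without loss of generality by translation equivariance) and set $A_0 = \{F \in \calf : \rho(F) \leq 0\}$. Join separability implies $\rho$ is FOSD-monotone, so $A_0$ is downward-closed in FOSD, closed under binary join, and (by downward closure) closed under binary meet; lower semicontinuity makes $A_0$ weakly closed. Define
\[
\vg(x) = \inf_{F \in A_0} F(x).
\]
Routine arguments give nondecreasingness, right-continuity, and $\vg(x) = 0$ for $x$ negative enough (using $\delta_{-N} \in A_0$ for large $N$). To see $\vg \not\equiv 0$, observe that otherwise, for any fixed $x_0 > 0$ one could find $F \in A_0$ with $F(x_0)$ arbitrarily small, meet it against $\delta_{x_0}$ to produce elements of $A_0$ weakly converging to $\delta_{x_0}$, and derive $\rho(\delta_{x_0}) \leq 0$ by lower semicontinuity---contradicting $\rho(\delta_{x_0}) = x_0 > 0$.

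The key technical step---and the main obstacle---is the reverse inclusion $\{F \in \calf : F \geq \vg \text{ pointwise}\} \subseteq A_0$, which is subtle because $\vg$ itself is typically not a distribution function. Given such an $F$, I enumerate $\mathbb{Q} = \{q_m\}_{m \geq 1}$, pick $H_m^n \in A_0$ with $H_m^n(q_m) \leq \vg(q_m) + 1/n$, and set $F_n := H_1^n \joinfosd \cdots \joinfosd H_n^n \in A_0$. Nondecreasingness of each $F_n$ together with right-continuity of $\vg$ yields $\limsup_n F_n(x) \leq \vg(x) \leq F(x)$ pointwise, so $G_n := F \meetfosd F_n$ lies in $A_0$ (by downward FOSD-closure) and converges pointwise, hence weakly, to $F$; lower semicontinuity then delivers $F \in A_0$. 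The remaining condition $\lim_{x \to \infty} \vg(x) < 1$ is forced by $\rho$ being real-valued: if $\lim \vg = 1$, then a CDF $H$ with tail $1 - H(y) = \sqrt{1 - \vg(y)}$ on $y$ sufficiently large fails to lie in any $A_t$ (by an asymptotic comparison of the decay rates of $1-H$ and $1-\vg(\cdot-t)$), forcing $\rho(H) = \infty$. Translation equivariance finally promotes the equivalence $\rho(F) \leq 0 \Leftrightarrow F \geq \vg$ pointwise to $\rho(F) \leq t \Leftrightarrow F(x) \geq \vg(x-t)$ for every $x$, which identifies $\rho$ with $\rho^\vg \in P^\VG$.
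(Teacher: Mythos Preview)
Your overall strategy is correct and closely parallels the paper's; in particular, your definition of $\vg$ as the pointwise infimum over $A_0$, the forward inclusion, and the promotion to all $t$ via translation equivariance are the same. Your replacement of the paper's Lindel\"of covering argument (used for the reverse inclusion $\{F:F\geq\vg\}\subseteq A_0$) by the rational-density construction is a genuine and pleasant simplification.

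There is, however, a real gap in your argument that $\lim_{x\to\infty}\vg(x)<1$. The construction $1-H(y)=\sqrt{1-\vg(y)}$ does not in general force $H\notin A_t$ for every $t$: the ``asymptotic comparison of decay rates'' presumes that a horizontal shift $y\mapsto y-t$ cannot compensate for the vertical square root, and this fails when $1-\vg$ decays very rapidly. Concretely, if $\vg(y)=1-2^{-2^{y}}$ for $y\geq 0$ (and $0$ otherwise), then $\sqrt{1-\vg(y)}=2^{-2^{y-1}}=1-\vg(y-1)$, so $H(y)=\vg(y-1)$ on the tail; the forward inclusion $H(x)\geq\vg(x-\rho(H))$ is then perfectly consistent with $\rho(H)\geq 1$, and no contradiction arises. (The subcase in which $\vg(x^*)=1$ at some finite $x^*$ also requires separate handling, since your $H$ then reaches $1$ at $x^*$ and can easily lie in $A_t$ for $t$ large.) The paper avoids this by slowing $\vg$ down \emph{horizontally} rather than vertically: it takes $F^*(x)=\vg(x^{1/3})$, so that for any fixed $t$ the gap $(x-t)-x^{1/3}\to\infty$, and then exploits that $\vg$ (being strictly below $1$ yet tending to $1$) must strictly increase at arguments of arbitrarily large size to locate an $x^*$ with $F^*(x^*)<\vg(x^*-\rho(F^*))$. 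Any construction making the horizontal lag between $H$ and $\vg$ diverge would repair your argument.
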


The proof of Theorem~\ref{3way} is given in the appendix.

The following corollary shows that imposing affine equivariance in place of translation equivariance leads to a full characterization of quantiles.

\begin{corollary}\label{aecor} A function $\rho :\mathcal{F}\rightarrow \mathbb{R}$
satisfies join
separability, affine equivariance, and lower semicontinuity if and only if it is
a quantile for some $\alpha \in (0,1)$. \end{corollary}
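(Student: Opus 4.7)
My plan is to handle the two directions separately; the ``if'' direction is a direct verification, and the ``only if'' direction uses Theorem~\ref{3way} to get a $c$-representation and then exploits the scaling piece of affine equivariance to pin down $c$.

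For the ``if'' direction I would fix $\alpha \in (0,1)$ and check each axiom for $\rho_\alpha$ directly. Join separability follows because $F \joinfosd G$ is the pointwise minimum of $F$ and $G$, so $\{x : (F \joinfosd G)(x) \geq \alpha\} = \{x : F(x) \geq \alpha\} \cap \{x : G(x) \geq \alpha\}$, whose infimum is $\max\{\rho_\alpha(F),\rho_\alpha(G)\}$. Affine equivariance is a routine change of variables in the defining infimum. For lower semicontinuity, upper semicontinuity plus monotonicity force CDFs to be right-continuous and $\{F_n \geq \alpha\}$ to be a closed half-line $[y_n,\infty)$, so $\rho_\alpha(F_n) = y_n \leq x^*$ gives $F_n(x) \geq \alpha$ for all $x \geq x^*$; weak convergence at any continuity point $x > x^*$ of $F$ then yields $F(x) \geq \alpha$, and right-continuity of $F$ at $x^*$ gives $F(x^*) \geq \alpha$, so $\rho_\alpha(F) \leq x^*$.

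For the ``only if'' direction, affine equivariance with $a = 1$ is translation equivariance, so Theorem~\ref{3way} furnishes $c \in \VC$ with $\rho(F) = \sup_{\alpha \in (0,1)}\bigl(\rho_\alpha(F) - c(\alpha)\bigr)$. Applying affine equivariance to $g(x) = ax$ for $a > 0$ and using $\rho_\alpha(F \circ g^{-1}) = a \rho_\alpha(F)$, a short calculation yields
\[\sup_{\alpha \in (0,1)}\bigl(\rho_\alpha(F) - c(\alpha)/a\bigr) \;=\; \sup_{\alpha \in (0,1)}\bigl(\rho_\alpha(F) - c(\alpha)\bigr)\]
for every $F$ and every $a > 0$; equivalently, $c$ and $c/a$ induce the same statistic. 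The crux is to deduce from this scaling invariance that $c$ takes only the values $0$ and $\infty$.

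To do this I would test the identity against the uniform distribution $F_U$ on $(0,1)$, where $\rho_\alpha(F_U) = \alpha$. Set $\alpha^* = \sup\{\alpha \in (0,1) : c(\alpha) < \infty\}$; monotonicity of $c$, the existence of $b$ with $c(b) = \infty$, and real-valuedness of $\rho$ place $\alpha^*$ in $(0,1)$. Letting $a \to \infty$ in the displayed identity, the left-hand side converges to $\alpha^*$, so $\rho(F_U) = \alpha^*$. Since $\alpha - c(\alpha) \leq \alpha \leq \alpha^*$ on $\{c < \infty\}$, achieving the supremum $\alpha^*$ forces a sequence $\alpha_n \to \alpha^*$ with $c(\alpha_n) \to 0$. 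Monotonicity of $c$ then gives $c \equiv 0$ on $(0, \alpha^*)$, lower semicontinuity at $\alpha^*$ forces $c(\alpha^*) = 0$, and monotonicity together with the definition of $\alpha^*$ forces $c \equiv \infty$ on $(\alpha^*, 1)$. With $c$ in this step-function form, $\rho(F) = \sup_{\alpha \in (0, \alpha^*]} \rho_\alpha(F)$, and left-continuity of $\alpha \mapsto \rho_\alpha(F)$ (a standard consequence of upper semicontinuity of $F$) delivers $\rho(F) = \rho_{\alpha^*}(F)$. The main obstacle is this passage from the scaling identity to the step-function shape of $c$; everything else is routine bookkeeping.
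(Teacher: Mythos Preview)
Your overall strategy is sound and genuinely different from the paper's. For the ``only if'' direction the paper passes through the $\vg$-representation of Theorem~\ref{3way} and pins down the shape of $\vg$ by testing affine equivariance against point masses $F_x$ and against a truncation of $\vg$; you instead pass through the $c$-representation and test against the uniform law on $(0,1)$, letting the scaling parameter $a\to\infty$ to identify $\rho(F_U)=\alpha^*$. Your route is arguably more transparent, since a $\{0,\infty\}$-valued step function $c$ reads off immediately as a single quantile.

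There is, however, a real gap in the step ``Since $\alpha - c(\alpha) \leq \alpha \leq \alpha^*$ on $\{c<\infty\}$\ldots'': this uses $c\geq 0$, which nothing in $\VC$ guarantees and which you have not derived. Without it the inference fails; for example $c(\alpha)=\alpha-\alpha^*$ on $(0,\alpha^*]$ and $c\equiv\infty$ above $\alpha^*$ lies in $\VC$ and gives $\sup_\alpha(\alpha-c(\alpha))=\alpha^*$, yet is not the step function (and the associated $\rho^c$ is of course not affine equivariant, so you have not yet used enough of the scaling identity). The repair is one line: affine equivariance applied to the scale-invariant point mass $F_0$ yields $\rho(F_0)=a\rho(F_0)$ for every $a>0$, so $\rho(F_0)=0$; since $\rho_\alpha(F_0)=0$ for all $\alpha$, this gives $\sup_\alpha(-c(\alpha))=0$, i.e.\ $c\geq 0$. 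Equivalently, send $a\to 0^+$ in your own displayed identity with $F=F_U$: any $\alpha$ with $c(\alpha)<0$ would drive the left side to $+\infty$. With $c\geq 0$ in hand your argument goes through, and the closing appeal to left-continuity of $\alpha\mapsto\rho_\alpha(F)$ is unnecessary: once $c(\alpha^*)=0$, monotonicity of $\alpha\mapsto\rho_\alpha(F)$ alone gives $\sup_{\alpha\leq\alpha^*}\rho_\alpha(F)=\rho_{\alpha^*}(F)$.
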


The proof of Corollary ~\ref{aecor} is given in the appendix.

\section{The dual result}

The axioms of join separability and lower semicontinuity are clearly dual, in a sense, to meet separability and upper semicontinuity.  In this section, we establish formally that a dual statement to Theorem~\ref{3way} holds by imposing these dual properties.

The upper quantile function, where the $\alpha$ quantile for $0<\alpha<1$ is defined by
$$\rho_{\alpha}^+(F)=\sup\{x\in\mathbb{R}:F(x) \leq \alpha\}.$$  

We define dual analogues of the join separability and lower semicontinuity axioms.

\begin{description} \item[\Meet:] For all $F,G\in\calf$, $\rho(F\meetfosd  G)=\min\{\rho(F),\rho(G)\}$
\end{description}

\begin{description} \item[Upper semicontinuity:] Suppose that $\rho(F_n) \geq x^*$ for all $n$, and that $F_n \rightarrow F$.  Then $\rho(F) \geq x^*$.
\end{description}

Let $\VH$ be the set of functions $\vh:\mathbb{R}\rightarrow[0,1)$ such that (a) $\vh$ is non-decreasing, (b) $\vh$ is right-continuous, (c) there exists $x\in\mathbb{R}$ such that $\vh(x)=1$, (d) there exists $x\in\mathbb{R}$ such that $\vh(x)<1$, and (e) $\lim_{x\rightarrow-\infty}\vh(x)>0$.  Let $P^\VH$ be the set of statistics $\rho^\vh:\mathcal{F}\rightarrow \mathbb{R}$ such that for all $\rho^\vh\in P^\VH$, there exists $\vh\in\VH$ such that $\rho^\vh(F)=\max\{t:F(x)\leq \vh(x-t)\textnormal{ for all }x\}$.

Let $\VD$ be the set of all non-decreasing, upper semicontinuous functions $\vd:(0,1)\rightarrow\mathbb{R}\cup\{-\infty\}$ for which there exists $b\in(0,1)$ such that $\vd(b)=-\infty$. Let $P^\VD$ be the set of functions $\rho^\vd$ such that for $\vd\in\VD$, $\rho^\vd(F)=\inf_{\alpha\in(0,1)}\left(\rho^+_{\alpha}(F)-\vd(\alpha)\right)$.

\begin{theorem}\label{thm:dual}For a statistic $\rho:\mathcal{F}\rightarrow\mathbb{R}$, the following three statements are equivalent:
\begin{enumerate}
\item $\rho$ satisfies meet separability, translation equivariance, and upper semicontinuity.
\item $\rho\in P^{\VH}$
\item $\rho \in P^{\VD}$.
\end{enumerate}
\end{theorem}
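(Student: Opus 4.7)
The plan is to deduce Theorem~\ref{thm:dual} from Theorem~\ref{3way} by introducing a reflection of CDFs about zero and applying the earlier result to a reflected statistic, rather than redoing the construction from scratch. I will define $\hat F(x):=1-F\!\left((-x)^{-}\right)$, where $F(y^{-}):=\lim_{z\uparrow y}F(z)$ denotes the left limit. A routine check shows that $\hat F\in\calf$ (nondecreasing, right-continuous, bounded in $[0,1]$), that $F\mapsto\hat F$ is an involution on $\calf$, reverses first-order stochastic dominance ($F\geqfosd G$ iff $\hat G\geqfosd\hat F$), swaps the lattice operations ($\widehat{F\meetfosd G}=\hat F\joinfosd\hat G$ and symmetrically for the join), conjugates translation by $b$ with translation by $-b$, and preserves weak convergence (continuity points of $\hat F$ correspond, up to sign, to those of $F$).

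Next I set $\tilde\rho(F):=-\rho(\hat F)$. The reflection's properties imply that $\rho$ satisfies meet separability, translation equivariance, and upper semicontinuity if and only if $\tilde\rho$ satisfies join separability, translation equivariance, and lower semicontinuity. Applying Theorem~\ref{3way} to $\tilde\rho$ then produces $\vg\in\VG$ and $\vc\in\VC$ with
\[
\tilde\rho(F)=\min\{t:F(x)\geq\vg(x-t)\text{ for all }x\}=\sup_{\alpha\in(0,1)}\bigl(\rho_\alpha(F)-\vc(\alpha)\bigr).
\]

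To turn these into the desired representations of $\rho$, I would set up two bijections. First, define $\vh(x):=1-\vg\!\left((-x)^{-}\right)$ and verify $\vh\in\VH$ by checking its five defining conditions against those of $\VG$ under the correspondence $x\mapsto -x$ and $\vg\mapsto 1-\vg$; unwinding the definitions then yields $\rho(F)=\max\{t:F(x)\leq\vh(x-t)\text{ for all }x\}$. Second, define $\vd(\beta):=-\vc(1-\beta)$ and check that it is nondecreasing, upper semicontinuous, and attains $-\infty$ at $1-b$, hence in $\VD$. Combining the identity $\rho_\alpha(\hat F)=-\rho^+_{1-\alpha}(F)$, which follows from $\hat F(x)\geq\alpha\iff F\!\left((-x)^{-}\right)\leq 1-\alpha$, with the substitution $\beta=1-\alpha$ gives
\[
\rho(F)=-\tilde\rho(\hat F)=\inf_{\beta\in(0,1)}\bigl(\rho^+_\beta(F)-\vd(\beta)\bigr).
\]

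The main technical obstacle is the interplay between left and right limits in the reflection. The most natural candidate $1-F(-x)$ is merely left-continuous, so the left-limit at $-x$ is needed to keep the result in $\calf$; this choice then forces careful bookkeeping when verifying the involution $\widehat{\hat F}=F$, the identity $\rho_\alpha(\hat F)=-\rho^+_{1-\alpha}(F)$, and the transfer of continuity points under weak convergence. Similarly, the passage $\vg\mapsto\vh$ must trace the five conditions defining $\VG$ through the reflection and the left-limit, which in particular requires that ``$\vg$ attains $0$ somewhere'' corresponds to ``$\vh$ attains $1$ somewhere'' and that the tail condition $\lim_{x\to\infty}\vg<1$ corresponds to $\lim_{x\to-\infty}\vh>0$. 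Once these limit subtleties are settled, everything else is routine bookkeeping inherited from Theorem~\ref{3way}.
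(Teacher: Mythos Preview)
Your proposal is correct and follows essentially the same route as the paper's own proof: the paper defines the reflection $\tau(F)(x)=1-F((-x)^{-})$, sets $\rho^*(F)=-\rho(\tau(F))$, verifies the same involution, lattice-swapping, translation-conjugation, and continuity properties, applies Theorem~\ref{3way} to $\rho^*$, and then converts via $\vh_\vg(x)=1-\sup_{y<-x}\vg(y)$ and $\vd(\alpha)=-\vc(1-\alpha)$. The only structural difference is that the paper closes the cycle by \emph{directly} checking that any $\rho^\vh$ and $\rho^\vd$ satisfy the three axioms (its Parts~5 and~6), whereas you propose to argue via bijections $\VG\leftrightarrow\VH$ and $\VC\leftrightarrow\VD$; both work, and your version is slightly more economical once the involution is established.
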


The proof of Theorem~\ref{thm:dual} is given in the appendix.

\section{Ordinal comparisons}

In this section, we investigate the ordinal content of our results.  We focus on our meet separability and upper semicontinuity characterization Theorem~\ref{thm:dual}, to keep things in line with classical economic literature.  Given is a binary relation $\succeq$ on $\F$.  We want to know when there is a $\rho:\F\rightarrow\mathbb{R}$ for which for all $F,G\in\F$, $\rho(F) \geq \rho(G)$ iff $F \succeq G$, and for which $\rho$ conforms to the properties of Theorem~\ref{thm:dual}.

To simplify the exposition, denote by $F+b$ the function $F \circ h^{-1}$ where $h(x) = x+b$.  

Consider the following properties, meant to invoke the content of the cardinal properties in Theorem~\ref{thm:dual}

The first property is a known necessary condition for any ranking which can be represented by a real-valued function:

\begin{description} \item[Weak Order:] $\succeq$ is complete and transitive.
\end{description}

Our first structural axiom is intended to capture the ordinal content of \Meet.  

\begin{description} \item[Ordinal Meet Separability] If $G \succeq F$, then $(F \meetfosd G) \sim F$.
\end{description}

Our next property is an ordinal counterpart of translation equivariance.

\begin{description} \item[Ordinal Translation Invariance]  $F \succeq G$ implies $F + b \succeq G + b$.
\end{description}

We need an analogue of the continuity condition. The following is standard:

\begin{description} \item[Ordinal Upper Semicontinuity:] If for all $n$, $F_n \succeq G$ and $F_n \rightarrow F$, then $F\succeq G$.
\end{description}

Finally, we postulate an axiom claiming that any $F$ can be strictly dominated below by a degenerate distribution function.

For $x\in\mathbb{R}$, we define

\[F_x(y)=\left\{
	\begin{array}{ll}
		0  & \mbox{if } y < x \\
		1 & \mbox{if } y \geq x
	\end{array}
\right. .\]

$F_x$ represents a point mass at $x$. 

\begin{description} \item[Lower Unboundedness:] For all $F$, there exists $x$ for which $F \succ F_x$.
\end{description}

We present an ordinal analogue of Theorem~\ref{thm:dual}.

\begin{theorem}\label{thm:ordinal}A binary relation $\succeq$ satisfies weak order, ordinal meet separability, ordinal translation invariance, ordinal upper semicontinuity, and lower unboundedness if and only if there is $\rho:\F\rightarrow\mathbb{R}$ for which $F\succeq G$ if and only if $\rho(F)\geq \rho(G)$ and $\rho$ satisfies the properties of Theorem~\ref{thm:dual}.\end{theorem}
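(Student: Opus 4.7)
The easy direction is a direct verification: any $\rho\in P^{\VH}$ representing $\succeq$ satisfies weak order (being real-valued); ordinal translation invariance, ordinal upper semicontinuity, and ordinal meet separability translate directly from their cardinal counterparts (the last via $\rho(F\meetfosd G)=\min(\rho(F),\rho(G))$); and lower unboundedness follows by taking any $x<\rho(F)$.

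For the hard direction, assume the five ordinal axioms. I first record two consequences. (i) The relation $\succeq$ respects FOSD: if $F\geqfosd G$ then $F\meetfosd G=G$ pointwise, so if $G\succ F$ held, ordinal meet separability would give $F\meetfosd G\sim F$, i.e.\ $G\sim F$, a contradiction; hence $F\succeq G$. (ii) Point masses are strictly ordered as $\mathbb{R}$: one direction is FOSD, and if $F_x\sim F_y$ with $b=x-y>0$, ordinal translation invariance iterates to $F_y\sim F_{y-nb}$ for all $n\in\mathbb{N}$; the lower-unboundedness witness $z$ for $F_y$ satisfies $z<y$ by (i), and once $n$ is large enough that $y-nb<z$ one has $F_z\geqfosd F_{y-nb}\sim F_y$, contradicting $F_y\succ F_z$.

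Define $\rho(F)=\inf B(F)$ where $B(F):=\{x\in\mathbb{R}:F_x\succeq F\}$. By (ii), $B(F)$ is upward closed; by ordinal upper semicontinuity it is closed, since $x_n\to x$ implies $F_{x_n}\to F_x$ weakly. Lower unboundedness produces $x_0\notin B(F)$, so $\rho(F)>-\infty$. The principal obstacle is showing $\rho(F)<\infty$: I first observe that if $F\succ F_x$ for every $x$, then $F\succ G$ for every $G$ with bounded-above support, because $G\leqfosd F_{\sup G}\prec F$. To rule out this regime altogether I plan to exploit the equivalences $F\meetfosd F_n\sim F_n$ (from meet separability applied to $F\succ F_n$) together with ordinal translation invariance to build a weakly convergent sequence that, via ordinal upper semicontinuity, forces a preference reversal contradicting $F\succ F_{x_0}$ for a suitable $x_0$. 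This is the step I expect to be the most delicate.

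Granted $\rho(F)\in\mathbb{R}$, I show $F\sim F_{\rho(F)}$: closedness of $B(F)$ gives $F_{\rho(F)}\succeq F$; for the reverse, pick $y_n\uparrow\rho(F)$, so that $F\succ F_{y_n}$, and apply ordinal translation invariance to obtain $F+(\rho(F)-y_n)\succ F_{\rho(F)}$. Since $F+(\rho(F)-y_n)\to F$ weakly, ordinal upper semicontinuity with the fixed target $F_{\rho(F)}$ yields $F\succeq F_{\rho(F)}$. Representation and the three properties of Theorem~\ref{thm:dual} then follow routinely: $F\succeq G\iff F_{\rho(F)}\succeq F_{\rho(G)}\iff \rho(F)\geq\rho(G)$ by (ii); translation equivariance of $\rho$ is immediate from the definition of $B$; cardinal meet separability transfers from the ordinal version through the representation; and cardinal upper semicontinuity follows because $\rho(F_n)\geq x^*$ entails $F_n\succeq F_{x^*}$, hence $F\succeq F_{x^*}$ by ordinal upper semicontinuity, hence $\rho(F)\geq x^*$.
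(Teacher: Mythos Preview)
Your plan is essentially correct and, in one respect, cleaner than the paper's: you bypass Rader's representation theorem entirely. The paper first invokes Rader to obtain some upper semicontinuous $V$ representing $\succeq$, and then uses a cardinality trick (building an injection from $\mathbb{R}$ into $\mathbb{Q}$) to rule out $F_{\rho(F)}\succ F$. Your argument for $F\succeq F_{\rho(F)}$---translate $F\succ F_{y_n}$ by $\rho(F)-y_n$ to get $F+(\rho(F)-y_n)\succ F_{\rho(F)}$, then let $y_n\uparrow\rho(F)$ and apply ordinal upper semicontinuity---is both shorter and more elementary, and it yields the representation directly without any auxiliary $V$.

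The one place where your plan is genuinely incomplete is the step you flagged as most delicate, namely $\rho(F)<\infty$. The equivalences $F\meetfosd F_n\sim F_n$ (equivalently, after translating, $F_0\meetfosd(F-n)\sim F_0$) are the right starting point, but the sequence $F_0\meetfosd(F-n)$ does \emph{not} converge weakly in $\mathcal{F}$: pointwise it tends to the constant $1$, which is not a CDF. The paper's fix is to truncate from below by taking the join with a fixed point mass, setting
\[
H_n\;:=\;\bigl(F_0\meetfosd(F-n)\bigr)\joinfosd F_{-1}.
\]
Since $H_n\geqfosd F_0\meetfosd(F-n)\sim F_0$, monotonicity gives $H_n\succeq F_0$; and now $H_n\to F_{-1}$ weakly (the join with $F_{-1}$ forces the value $0$ on $(-\infty,-1)$, while on $[-1,0)$ the value is $F(\cdot+n)\to 1$). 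Ordinal upper semicontinuity then yields $F_{-1}\succeq F_0$, contradicting the strict ordering of point masses you already established. Note that the contradiction lands at $F_{-1}\succeq F_0$ rather than at ``$F\succ F_{x_0}$'' as you anticipated; you may want to adjust the target accordingly. With this truncation trick supplied, the rest of your argument goes through.
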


The proof of Theorem~\ref{thm:ordinal} is given in the appendix.

\section{Conclusion}

We have presented a new class of statistics, \maqs{}, and have shown that they are characterized by three axioms: join separability, translation equivariance, and lower semicontinuity. We have also provided dual and ordinal results. 

The statistics we introduce we provide have many potential uses. We described two applications in the introduction. The first involved the setting of tax brackets. More generally, we may apply these ideas in other contexts where we use a distribution to select a cutoff point. For example, when grading exams, professors must choose cutoff points that determine whether a student's score is sufficient to get an A, or to pass an exam. Similarly, an government agency may wish to set a poverty line to determine whether an individual qualifies as poor.

The second application we provided involved the ranking of financial assets. In this context, our statistic is a generalization of a commonly used measure known as value at risk. We may also apply this idea to other contexts in which a researcher may need to cardinally score or ordinally rank distributions, of incomes, wealth, health outcomes, student performance, weather, etc.

\section*{Appendix}

\begin{proof}[Proof of Theorem \ref{comonotonic}.] As $X\vee Y\geq X$ and $X\vee Y\geq Y$, it is clear that
$F_{X\vee Y}\geqfosd F_{X}$ and $F_{X\vee Y}\geqfosd F_{Y}$.
Consequently, $F_{X\vee Y}\geqfosd F_{X}\joinfosd F_{Y}$. A similar
statement holds for meets.

Now, suppose that $X$ and $Y$ are comonotonic. We will establish that $F_{X\vee Y}=F_{X}\joinfosd F_{Y}$. By means of contradiction, suppose
there exists $z\in \mathbb{R}$ for which $F_{X\vee Y}\left( z\right) <\min
\left\{ F_{X}\left( z\right) ,F_{Y}\left( z\right) \right\} $. Then

$$p\left( \left\{ \omega :X\left( \omega \right) \vee Y\left( \omega \right)
\leq z\right\} \right) <p\left( \left\{ \omega :X\left( \omega \right) \leq
z\right\} \right) \text{,}$$

so there exists some $\omega \in \Omega $ where $X\left(\omega\right) \leq
z<Y\left(\omega\right) $. Similarly, there exists $\omega^{\prime}\in
\Omega $ for which $Y\left(\omega^{\prime }\right) \leq z<X\left(\omega
^{\prime}\right)$. Now, if $Y\left(\omega\right) \leq Y\left( \omega
^{\prime}\right)$, we conclude that $z<Y\left( \omega \right) \leq Y\left(
\omega^{\prime}\right) \leq z$, or $z<z$, a contradiction. Hence, $Y\left(
\omega^{\prime}\right) <Y\left( \omega \right)$. Hence, by comonotonicity
of $X$ and $Y$, $X\left(\omega^{\prime}\right) \leq X\left(\omega
\right) $. Thus, $z<X\left(\omega^{\prime}\right) \leq X\left(\omega
\right) \leq z$, or $z<z$, a contradiction. Therefore, 
$F_{X\vee Y}\left(z\right) =\min\left\{F_{X}\left(z\right),F_{Y}\left(z\right)\right\}$. 
A similar proof establishes the meet equality.\end{proof}

\begin{proof}[Proof of Corollary \ref{tightness}.] Let $F$,$G\in \mathcal{F}$. As usual define the
point-valued inverses on $\left[0,1\right]$ as
\begin{eqnarray*}
F^{-1}\left( y\right)  &=&\inf \left\{ x:F\left( x\right) \geq y\right\}  \\
G^{-1}\left( y\right)  &=&\inf \left\{ x:G\left( x\right) \geq y\right\}
\text{.}
\end{eqnarray*}
Consider the probability space $\left(\left(0,1\right),\Sigma,p\right)$, 
where $\Sigma$ are the Lebesgue measurable sets and $p$ is the uniform
distribution on $\left(0,1\right) $. Let $X=F^{-1}$ and $Y=G^{-1}$. Note
that $F_{X}=F$ and $F_{Y}=G$.\footnote{For example, we can show that $F^{-1}(y) \leq z$ if and only if $y \leq F(z)$.  Hence, the probability of $y$ for which $F^{-1}(y)\leq z$ is $F(z)$.  To see this, if $y\leq F(z)$, then by definition $F^{-1}(y)\leq z$.  Suppose instead that $y> F(z)$.  By right continuity and nondecreasingness of $F$, there is $\epsilon^*>0$ for which for all $0\leq \epsilon\leq \epsilon^*$, $y>F(z+\epsilon^*)$.  Consequently, by definition $F^{-1}(y)>z$. These two facts establish the claim.} Moreover, $X$ and $Y$ are both weakly
increasing, hence comonotonic. Therefore, by Theorem \ref{comonotonic}, the conclusion holds.\end{proof}

We state and prove the following three lemmas.

\begin{lemma}\label{lem:monfosd}A statistic $\rho$ satisfies join separability only if, for all $F,G$, $F\geqfosd  G$ implies that $\rho(F) \geq \rho(G)$.\end{lemma}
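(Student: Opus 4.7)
The plan is very short because join separability immediately gives monotonicity once the join operation is unpacked. The key observation is that, by the definition of $\joinfosd$ in Section~2, the join $F \joinfosd G$ is the pointwise minimum of the CDFs $F$ and $G$ viewed as functions $\mathbb{R}\to[0,1]$.

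First, I would rewrite $F \geqfosd G$ using the definition from the preliminaries: this means $F(x) \leq G(x)$ for every $x \in \mathbb{R}$. Under this inequality, the pointwise minimum of $F$ and $G$ is simply $F$, so $F \joinfosd G = F$.

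Applying join separability to this identity yields
\[
\rho(F) = \rho(F \joinfosd G) = \max\{\rho(F), \rho(G)\},
\]
and from the right-hand side we immediately read off $\rho(F) \geq \rho(G)$, which is the desired conclusion.

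There is no real obstacle here; the lemma is essentially a one-line unpacking of definitions. The only thing to be careful about is the convention that $F \geqfosd G$ corresponds to the pointwise inequality $F \leq G$ of CDFs (not the reverse), which is why the join under $\geqfosd$ is the pointwise minimum and why $F \joinfosd G$ collapses to $F$ under the hypothesis.
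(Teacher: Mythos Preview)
Your proposal is correct and essentially identical to the paper's proof: both observe that $F\geqfosd G$ forces $F\joinfosd G=F$, then apply join separability to get $\rho(F)=\max\{\rho(F),\rho(G)\}\geq\rho(G)$. The only cosmetic difference is that you spell out explicitly why the join collapses to $F$ via the pointwise-minimum definition, whereas the paper states this in one step.
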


\begin{proof}[Proof of Lemma~\ref{lem:monfosd}.]
Let $\rho$ satisfy join separability and let $F,G\in\mathcal{F}$ such that $F\geqfosd  G$. By join separability, $\rho(F) \vee \rho(G) = \rho(F \joinfosd G) = \rho(F)$. Because $\rho(F) \vee \rho(G) \geq \rho(G)$ it follows that $\rho(F) \geq \rho(G)$.\end{proof}

\begin{lemma}\label{lemmaa} Let $g:\mathbb{R}\rightarrow (0,1]$ be any nonincreasing function, for which there is $x\in\mathbb{R}$ such that $g(x) < 1$.  Then $\rho(F) =\inf\{x:F(x)\geq g(x)\}$ is lower semicontinuous.\end{lemma}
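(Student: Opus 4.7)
The plan is to exploit the monotonicity mismatch between $F$ and $g$. Because $F$ is nondecreasing while $g$ is nonincreasing, the set $S(F):=\{x\in\mathbb{R}: F(x)\geq g(x)\}$ is upward closed: if $F(x_0)\geq g(x_0)$ and $x\geq x_0$, then $F(x)\geq F(x_0)\geq g(x_0)\geq g(x)$. Moreover, $S(F)$ is nonempty, because by hypothesis there is $x_0$ with $g(x_0)<1$, and since $\sup F=1$ we can pick $y\geq x_0$ with $F(y)>g(x_0)\geq g(y)$. So $\rho(F)=\inf S(F)$ is a well-defined element of $\mathbb{R}\cup\{-\infty\}$, and $S(F)$ is a right half-line with left endpoint $\rho(F)$.

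The immediate consequence I would record next is the following two-sided characterization: $\rho(F)\leq x^*$ if and only if for every $y>x^*$ one has $F(y)\geq g(y)$. The forward direction is just $y>x^*\geq\rho(F)$ combined with upward closedness; the converse follows by taking the infimum of the $y$'s.

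With this in hand, the proof of lower semicontinuity is short. Assume $F_n\to F$ weakly and $\rho(F_n)\leq x^*$ for every $n$. For any $y>x^*$ and any $n$, the characterization above gives $F_n(y)\geq g(y)$. Fix any continuity point $y$ of $F$ with $y>x^*$; by weak convergence $F_n(y)\to F(y)$, so $F(y)\geq g(y)$, i.e.\ $y\in S(F)$, so $\rho(F)\leq y$. Since the continuity points of $F$ are dense in $\mathbb{R}$, taking the infimum over such $y>x^*$ yields $\rho(F)\leq x^*$.

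The only step that requires any care is handing over from the sequence $F_n$ to the limit $F$: weak convergence only gives pointwise convergence at continuity points of $F$, not everywhere, which is why the upward-closedness of $S(F)$ is crucial—it lets us restrict attention to $y$ strictly greater than $x^*$ and then to the dense set of continuity points, rather than having to test the defining inequality at a single critical $x$. Once this observation is in place, there is no real obstacle.
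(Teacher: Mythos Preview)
Your proof is correct and follows essentially the same route as the paper's: both arguments use the monotonicity mismatch to ensure $F_n(y)\geq g(y)$ for every $y>x^*$, restrict to continuity points $y>x^*$ of $F$ to pass to the limit, and then invoke density of continuity points to conclude $\rho(F)\leq x^*$. Your exposition is a bit more explicit about the upward-closedness of $S(F)$ and the resulting characterization of $\{\rho(F)\leq x^*\}$, but the underlying argument is the same.
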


\begin{proof}[Proof of Lemma \ref{lemmaa}.] Let $F_n \rightarrow  F$, and suppose that $\rho(F_n) \leq x^*$ for all $n$.  Then by definition, $F_n(x^*+\epsilon) \geq g(x^*+\epsilon)$ for all $n$ and $\epsilon > 0$.  Let $x_m\rightarrow x^*$ be any sequence of continuity points of $F$ for which $x_m > x^*$ (the continuity points of $F$ are dense in $\mathbb{R}$).  Then since $F_n$ is nondecreasing and $g$ is nonincreasing, $F_n(x_m) \geq g(x_m)$ for all $m$, and since $F_n\rightarrow F$, by taking limits we obtain $F(x_m)\geq g(x_m)$ for all $m$.  Therefore, $\inf\{x:F(x) \geq g(x)\}\leq x^*$.  \end{proof}

\begin{lemma}\label{lemmab} Let $g:\mathbb{R}\rightarrow (0,1]$ be any nonincreasing function, for which there is $x\in\mathbb{R}$ such that $g(x)<1$.  Then $\rho(F)=\inf\{x:F(x) \geq g(x)\}$ satisfies meet and join separability.\end{lemma}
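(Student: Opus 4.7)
The plan is to reformulate everything in terms of the super-level sets $A_F := \{x\in\mathbb{R} : F(x)\geq g(x)\}$, so that by definition $\rho(F)=\inf A_F$. The first step is to verify that each $A_F$ is upward closed in $\mathbb{R}$: given $x\in A_F$ and $y\geq x$, the fact that $F$ is nondecreasing and $g$ is nonincreasing yields $F(y)\geq F(x)\geq g(x)\geq g(y)$, so $y\in A_F$.

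Next I would translate the \textsc{fosd} lattice operations into set-theoretic operations on the $A$'s. Since $F\joinfosd G$ is the pointwise minimum and $F\meetfosd G$ the pointwise maximum of $F$ and $G$, the conditions ``$(F\joinfosd G)(x)\geq g(x)$'' and ``$(F\meetfosd G)(x)\geq g(x)$'' become, respectively, ``$x\in A_F\cap A_G$'' and ``$x\in A_F\cup A_G$''. Hence $\rho(F\joinfosd G)=\inf(A_F\cap A_G)$ and $\rho(F\meetfosd G)=\inf(A_F\cup A_G)$.

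The meet case then follows immediately from the elementary identity $\inf(A_F\cup A_G)=\min\{\inf A_F,\inf A_G\}$, which holds for arbitrary subsets of $\mathbb{R}$. The join case, $\inf(A_F\cap A_G)=\max\{\inf A_F,\inf A_G\}$, is the only place where upward closedness is used and is the main thing to verify. The $\geq$ direction is immediate from $A_F\cap A_G\subseteq A_F,A_G$. For the $\leq$ direction, given $\epsilon>0$, I would pick $x_1\in A_F$ and $x_2\in A_G$ with $x_i$ within $\epsilon$ of the corresponding infimum, set $x=\max\{x_1,x_2\}$, and invoke upward closedness of both sets to conclude $x\in A_F\cap A_G$. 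Thus $\inf(A_F\cap A_G)\leq x\leq\max\{\inf A_F,\inf A_G\}+\epsilon$, and letting $\epsilon\to 0$ finishes the argument; the same ``max of two witnesses'' trick works if one of the infima is $-\infty$ by replacing ``$+\epsilon$'' with ``below an arbitrary $M$''.

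I do not anticipate any serious obstacle. The only nontrivial step is the $\leq$ direction for joins, and the key insight—that upward closedness of the $A_F$'s lets us promote two separate witnesses $x_1,x_2$ to a single common witness $\max\{x_1,x_2\}$—falls out directly from the monotonicity hypotheses on $F$ and $g$.
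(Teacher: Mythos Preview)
Your proof is correct. The argument is essentially equivalent to the paper's, but repackaged more cleanly: the paper works directly with $\epsilon$-perturbations around $x^*=\rho(F)$ and $y^*=\rho(G)$, showing e.g.\ that $F(x^*-\epsilon)<g(x^*-\epsilon)$ while $F(x^*+\epsilon)\geq g(x^*+\epsilon)$ and $G(x^*+\epsilon)\geq g(x^*+\epsilon)$, and then reading off the value of $\rho$ on the join and meet. Your reformulation isolates the one structural fact driving all of those inequalities---upward closedness of $A_F$---and then reduces both separability statements to the identities $\inf(A_F\cup A_G)=\min\{\inf A_F,\inf A_G\}$ and $\inf(A_F\cap A_G)=\max\{\inf A_F,\inf A_G\}$ for up-sets. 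The content is the same (the paper's step ``$F(x^*+\epsilon)\geq g(x^*+\epsilon)$'' is exactly your upward-closedness observation applied at a witness below $x^*+\epsilon$), but your version makes clearer why monotonicity of $F$ and $g$ is precisely what is needed, and it separates the lattice-to-set translation from the infimum computation.
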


\begin{proof}[Proof of Lemma \ref{lemmab}.]Let $F,G\in \calf$ and suppose that $\rho(F)\geq \rho(G)$.  We first consider $F\joinfosd G$, and let $x^*=\rho(F)$.  Observe that for every $\epsilon >0$, $F(x^*-\epsilon)<g(x^*-\epsilon)$.  Therefore, for any $\epsilon >0$, $(F\joinfosd G)(x^*-\epsilon)<g(x^*-\epsilon)$, so that $\rho(F\joinfosd G)\geq x^*$.  On the other hand, for any $\epsilon > 0$, we have $F(x^*+\epsilon)\geq g(x^*+\epsilon)$, similarly since $\rho(G)\leq x^*$, we have $G(x^*+\epsilon)\geq g(x^*+\epsilon)$, so that $(F\joinfosd G)(x^*+\epsilon) \geq g(x^*+\epsilon)$, so that $\rho(F\joinfosd G)\leq x^*$, implying $\rho(F\joinfosd G)=x^*$.

Next we consider $F\meetfosd G$.  Analogously, let $y^*=\rho(G)\leq  \rho(F)$.  Then for any $\epsilon>0$, $G(y^*+\epsilon)\geq g(y^*+\epsilon)$, so that $(F\meetfosd G)(y^*+\epsilon)\geq g(y^*+\epsilon)$, so that $\rho(F\meetfosd G) \leq y^*$.  For any  $\epsilon > 0$, $G(y^*-\epsilon)< g(y^*-\epsilon)$ and since $y^*\leq \rho(F)$, $F(y^*-\epsilon)<g(y^*-\epsilon)$, so that $(F\meetfosd G)(y^*-\epsilon)<g(y^*-\epsilon)$, which implies that $\rho(F\meetfosd G)\geq y^*$, or $\rho(F\meetfosd G)=y^*$. \end{proof}

For the following proofs, we define, for $x\in\mathbb{R}$, the distribution $F_x$, where 

\[F_x(y)=\left\{
	\begin{array}{ll}
		0  & \mbox{if } y < x \\
		1 & \mbox{if } y \geq x
	\end{array}
\right. .\]

\begin{proof}[Proof of Theorem~\ref{3way}] \textbf{Part 1:} We show that statement (1) implies statement (2). Let $\rho$ satisfy the three axioms. For $z\in\mathbb{R}$, define $\vg_z(x)\equiv\inf\{F(x):\rho(F)\leq z\}$. Denote $\vg\equiv \vg_0$.

\textbf{Step One.} We first show that $\vg_{z}(x)=\vg(x-z)$ for all $x\in\mathbb{R}$. Let $z\in\mathbb{R}$ and let $h(x)=x-z$. Recall that $\vg_{z}(x)=\inf\{F(x):\rho(F)\leq z\}$. By translation equivariance, $\rho(F\circ h^{-1})=\rho(F)-z$ and therefore $\rho(F)=\rho(F\circ h^{-1})+z$. It follows that $\vg_{z}(x)=\inf\{F(x):\rho(F\circ h^{-1})\leq 0\}=\inf\{(F\circ h)(x):\rho(F)\leq 0\}=\inf\{F(x-z):\rho(F)\leq 0\}=\vg(x-z)$.

\textbf{Step Two.} We demonstrate that $\vg$ is in $\VG$; that is, that it satisfies the five properties.

To see that $\vg$ is non-decreasing, let $x \leq x'$.  Then for all $F$ for  which $\rho(F)\leq 0$, we have $F(x) \leq F(x')$.  Obviously then $\vg(x) \leq F(x')$ for all $\rho(F)\leq 0$. It follows that $\vg(x)\leq\inf\{F(x'):\rho(F)\leq 0\}=\vg(x')$.

To see that $\vg$ is right-continuous, note that all functions in $\calf$ are upper-semicontinuous and non-decreasing. Because $\vg$ is the pointwise infimum of a collection of upper-semicontinuous functions, $\vg$ must be upper-semicontinuous \citep[See][Page 76, Theorem 3]{berge}. Because $\vg$ is upper-semicontinuous and non-decreasing, it follows that $\vg$ is right-continuous.

To see that there exists $x\in\mathbb{R}$ such that $\vg(x)=0$, consider the distribution $F_0$, where $F_0(x)=0$ for $x<0$ and $F_0(x)=1$ for $x\geq 0$.  By construction, $F_0(x) \geq \vg_{\rho(F_0)}(x)$ for all $x$. Therefore $F_0(x)\geq \vg(x-\rho(F_0(x)))$ for all $x$. This implies that $F_0(x+\rho(F_0(x)))\geq \vg(x)$ for all $x$.  Thus $\vg(x)=0$ for all $x < -\rho(F_0(x))$.

To see that there exists $x\in\mathbb{R}$ such that $\vg(x)>0$, suppose by means of contradiction that $\vg(x)=0$ for all $x$. We show that $\rho(F_0)\leq z$ for all $z\in\mathbb{R}$, a contradiction which proves the claim.
Let $z\in\mathbb{R}$. From step one, it follows that $\vg_z(x)=0$ for all $x\in\mathbb{R}$.

We claim that there exists $F^\ast\in\mathcal{F}$ such that $\rho(F^\ast)\leq z$ and $F^\ast(0)=0$. 
Since $\vg_{z}(0) = 0$, there is a sequence $F^n \in\mathcal{F}$ for which $\rho(F^n) \leq z$ and $F^n(0) \rightarrow 0$.
Consider the distribution $F_1$, where $F_1(x)=0$ for $x<1$ and $F_1(x)=1$ for $x \geq 1$.  Let us define $G^n = F^n \meetfosd F_1$.  
Now, observe that $F^n \geqfosd  G^n$. By Lemma~\ref{lem:monfosd}, it follows from join separability that $\rho(F^n) \geq \rho(G^n)$. Consequently, $\rho(G^n)\leq z$.
Next, define $H^1=G^1$ and $H^n = G^n \joinfosd  H^{n-1}$ for $n\geq 2$. By join-separability, $\rho(H^n) \leq z$ for each $n$; further, for each $x$, $H^{n+1}(x) \leq H^{n}(x)$ for all $n$.
 Finally, it is easy to see that $H^n \rightarrow F^\ast$ for some $F^\ast\in\mathcal{F}$. 
 That is, take $F^\ast(x) = \inf_n H^n(x)= \lim_{n\rightarrow\infty} H^n(x)$, and observe that $F^\ast\in\mathcal{F}$:  this follows as $F^\ast$ is nondecreasing, satisfies the appropriate limit conditions, and is also upper semicontinuous as the lower envelope of upper semicontinuous functions (by the same \citep{berge} reference as above).
Obviously $F^\ast(0) = 0$.  By lower semicontinuity, $\rho(F^\ast)\leq z$. 

For the distribution $F_0$, $F_0(x)=1$ for $x\geq 0$; therefore $F_0(x)\geq F^\ast(x)$ for all $x$. This implies that $F^\ast \geqfosd  F_0$. Thus by Lemma~\ref{lem:monfosd} $\rho(F^\ast)\geq\rho(F_0)$ and therefore $\rho(F_0)\leq z$.

To show that $\lim_{x\rightarrow\infty}\vg(x)<1$, 
suppose by means of contradiction that $\lim_{x\rightarrow\infty}\vg(x)=1$. First, suppose there exists $x^*$ for which $\vg(x^*)=1$.
Let $F\in\mathcal{F}$ for which $F(x)<1$ for all $x$.
Define $h:\mathbb{R}\rightarrow \mathbb{R}$ by $h(x)=x-\rho(F)$.
Because $(F\circ h^{-1})(x^*)<1$, $\rho(F\circ h^{-1})>0$.  But $\rho(F\circ h^{-1})=\rho(F)-\rho(F)$ by translation equivariance, obtaining $0>0$, a contradiction.

Next, suppose there is no such $x^*$.  Observe then that $\vg\in \mathcal{F}$. Let $h(x) = \sqrt[3]{x}$ and note that $F^*\equiv\vg\circ h\in\mathcal{F}$ as well.
Observe that there is $\overline{x}$ such that for all $x\geq\overline{x}$, $x-\rho(F^*)>h(x)$ and such that $x-\rho(F^*)-h(x)$  is increasing in $x$.
Let $\varepsilon>0$ such that $\varepsilon<\overline{x}-\rho(F^*)-h(\overline{x})$.
Let $x^*\geq\overline{x}$ such that $\vg(h(x^*)+\varepsilon)>\vg(h(x^*))$.\footnote{The existence of $x^*$ follows as $\vg(x)<1$ for every $x$ and $\lim_{x\rightarrow\infty}\vg(x)=1$.}
Note that, by construction, $x^*-\rho(F^*)>h(x^*)+\varepsilon$.
Consequently, by monotonicity, $\vg(x^*-\rho(F^*))\geq\vg(h(x^*)+\varepsilon)$.
Therefore, $\vg(x^*-\rho(F^*))>\vg(h(x^*))=F^*(x^*)$.  This implies that $\vg_{\rho(F^*)}(x^*)>F^*(x^*)$. But, by the definition of $\vg$,  $F^*(x) \geq \vg_{\rho(F^*)}(x)$ for all $x$, a contradiction.

\textbf{Step Three.} Let $F^*\in\mathcal{F}$ such that $F^*(x)\geq \vg(x)$ for all $x$. We show that $\rho(F^*)\leq 0$. For a function $f:\mathbb{R}\rightarrow\mathbb{R}$, define $U(f)=\{(x,y)\in\mathbb{R}^2:y>f(x)\}$ to be the complement in $\mathbb{R}^2$ of the subgraph of $f$. Because $U(\vg)$ is open in $\mathbb{R}^2$, it is separable and is therefore a Lindel\"{o}f space \citep[see][Theorem 16.11]{willard}. Because $U(F)$ is open in $U(\vg)$ for all $F$ such that $\rho(F)\leq 0$, and because $U(\vg)\subseteq\bigcup_{F\in\calf:\rho(F)\leq 0}U(F)$, it follows that $\{U(F):\rho(F)\leq 0\}$ is an open cover of $U(\vg)$. Thus, by the definition of a Lindel\"{o}f space, there is a countable subcover $\{F_n\}\subset\{F:\rho(F)\leq 0\}$ so that $U(\vg)\subseteq\bigcup_n U(F_n)$. Since for each $n$, $U(F_n)\subseteq U(\vg)$, we may conclude that $U(\vg) = \bigcup_n U(F_n)$.  Consequently \[\{(x,y)\in\mathbb{R}^2:\vg(x) \geq y\}=\bigcap_n \{(x,y)\in\mathbb{R}^2:F_n(x)\geq y\}.\]  

Now define $F_1'(x)=F_1(x)$ and inductively for $n\geq 2$, $F_n'(x)=\min\{F_n(x),F'_{n-1}(x)\}$.  For each $n$, $\rho(F_n) \leq 0$, so conclude by induction and by join separability that $\rho(F'_n)\leq 0$.  Then for all $n\geq 1$ and all $x$, $F'_{n+1}(x)\leq F'_n(x)$ and observe that \begin{equation}\label{eq:intersection}\{(x,y)\in\mathbb{R}^2:\vg(x) \geq y\}=\bigcap_n \{(x,y)\in\mathbb{R}^2:F'_n(x)\geq y\}.\end{equation}

Finally, we claim that $(F'_n \meetfosd  F^*)\rightarrow F^*$.  For all $x\in\mathbb{R}$, since $F_n'(x)$ is a decreasing sequence in $\mathbb{R}$, $\lim_n F_n'(x)=\vg(x)$.  As $F^*(x)\geq \vg(x)$ for all $x$, conclude that $\lim_n \max\{F_n'(x),F^*(x)\}=F^*(x)$.  Consequently $(F_n'\meetfosd  F^*)\rightarrow F^*$. 

Because  $(F'_n\meetfosd F^*) \leqfosd F'_n$, it follows from join separability and Lemma~\ref{lem:monfosd} that $\rho(F'_n \meetfosd  F^*)\leq \rho(F'_n) \leq 0$ for all $n$.
 By lower semicontinuity, because $\rho(F'_n \meetfosd  F^*)\leq 0$ for all $n$ and $(F'_n \meetfosd  F^*)\rightarrow F^*$, it follows that $\rho(F^*)\leq 0$.

\textbf{Step Four.} Let $F\in\mathcal{F}$. We show that $\rho(F)=\min\{t:F(x)\geq \vg(x-t)\textnormal{ for all }x\}$. By construction, $F(x)\geq\vg_{\rho(F)}(x)=\vg(x-\rho(F))$ for all $x\in\mathbb{R}$. This implies that $\rho(F)\in\{t:F(x)\geq \vg(x-t)\textnormal{ for all }x\}$ and therefore that $\min\{t:F(x)\geq \vg(x-t)\textnormal{ for all }x\}\leq\rho(F)$. 
Next, let $t\in\mathbb{R}$ such that $F(x)\geq\vg(x-t)$ for all $x\in\mathbb{R}$. Define $h(x)=x-t$, and note that  $F(x+t)=(F \circ h^{-1})(x)\geq\vg(x)$ for all $x\in\mathbb{R}$. By step three, $\rho(F \circ h^{-1})\leq 0$, which implies by translation equivariance that $\rho(F)\leq t$. Thus $\rho(F)\leq\min\{t:F(x)\geq \vg(x-t)\textnormal{ for all }x\}$. Consequently, $\rho(F)=\min\{t:F(x)\geq \vg(x-t)\textnormal{ for all }x\}$.

\textbf{Part 2:} We show that statement (2) implies statement (3). Let $\rho\in P^\VG$, and let $\vg\in\VG$ such that $\rho=\rho^\vg$. 
Define $\vc:(0,1)\rightarrow\mathbb{R}\cup\{+\infty\}$ such that $\vc(\alpha)=\inf\{x:\vg(x)\geq\alpha\}$, where an infimum over the empty set is defined as $+\infty$. We show that $\vc\in\VC$ and $\rho=\rho^\vc$.

To show that $\vc$ is non-decreasing, let $\alpha,\alpha'\in(0,1)$ such that $\alpha < \alpha'$. The fact that $\alpha<\alpha'$ implies that $\{x:\vg(x)\geq\alpha'\}\subseteq\{x:\vg(x)\geq\alpha\}$.
Therefore, $\inf\{x:\vg(x)\geq\alpha\}\leq\inf\{x:\vg(x)\geq\alpha'\}$, and therefore that $\vc(\alpha)\leq\vc(\alpha')$.

To show that $\vc$ is lower semicontinuous, we show that its supergraph is closed. Because $\vg$ is upper-semicontinuous, its subgraph is closed. Thus we show that a point $(x,y)\in\mathbb{R}\times(0,1)$ is in the supergraph of $\vc$ if and only if $(y,x)$ is in the subgraph of $\vg$. 
Let $(x,\alpha)\in\mathbb{R}\times(0,1)$ such that $\vg(x)\geq \alpha$. By the definition of $\vc$, this implies that $x\geq\vc(\alpha)$. 
Next, let $(x,\alpha)\in\mathbb{R}\times(0,1)$ such that $x\geq\vc(\alpha)$. Note that, as the subgraph of $\vg$ is closed, $c(\alpha)=\inf\{x:\vg(x)\geq\alpha\}\in\{x:\vg(x)\geq\alpha\}$. Consequently, $\vg(x)\geq \alpha$.

To show that there exists $b\in(0,1)$ such that $\vc(b)=+\infty$, let $b\in(0,1)$ such that $b>\lim_{x\rightarrow\infty}\vg(x)$. By construction, the set $\{x:\vg(x)\geq b\}$ is empty. Hence $\vc(b)=\inf\{x:\vg(x)\geq b\}=+\infty$.

Finally we claim that for all $F\in\mathcal{F}$, $\rho(F) = \sup_{\alpha\in(0,1)}\left(\rho_{\alpha}(F)-\vc(\alpha)\right)$.  First, we establish that $\rho(F) \geq \sup_{\alpha\in(0,1)}\left(\rho_{\alpha}(F)-\vc(\alpha)\right)$.  Let $\alpha \in (0,1)$ be arbitrary, and choose any $t\in\mathbb{R}$ for which for all $y$, $F(y+t)\geq \vg(y)$.  Because of upper semicontinuity of $\vg$, we have $\vg(\vc(\alpha))\geq \alpha$.  Because of the choice of $t$, we have $F(\vc(\alpha)+t)\geq\vg(c(\alpha))\geq \alpha$.  By definition of $\rho_{\alpha}(F)$, we then may conclude that $\rho_{\alpha}(F)\leq \vc(\alpha)+t$.  So, for any $t$ for which $F(y+t)\geq \vg(y)$ for all $y$, we have $\rho_{\alpha}(F)\leq \vc(\alpha)+t$.  Conclude that $\rho_{\alpha}(F)\leq \vc(\alpha)+\inf\{t:F(y+t)\geq \vg(y)\mbox{ for all }y\}=\vc(\alpha)+\rho(F)$.  Since $\alpha$ was arbitrary, $\sup_{\alpha\in(0,1)}\rho_{\alpha}(F)-\vc(\alpha)\leq\rho(F)$, which is what we wanted to show.

To show equality with the supremum, let us assume by means of contradiction that there is some $F\in\mathcal{F}$ and $\hat{\epsilon}>0$ for which for all $\alpha\in(0,1)$ $\rho(F)-\hat{\epsilon}\geq \rho_{\alpha}(F)-\vc(\alpha)$.  Without loss, we may assume that $\rho(F)=0$, so that $-\hat{\epsilon}\geq\rho_{\alpha}(F)-\vc(\alpha)$.

By definition of $\rho$ and since $\rho(F)=0$, for every $\epsilon  > 0$, there exists an $x(\epsilon)$ so that $\vg(x(\epsilon)+\epsilon)>F(x(\epsilon))$.  Let  $\hat{\alpha}=\phi(x(\hat{\epsilon})+\hat{\epsilon})$.  Since $\vg(x(\hat{\epsilon})+\hat{\epsilon})>F(x(\hat{\epsilon}))$, we know by upper semicontinuity of $F$ that $\rho_{\hat{\alpha}}(F)>x(\hat{\epsilon})$.  Since $\vg(x(\hat{\epsilon})+\hat{\epsilon})=\hat{\alpha}$, we know that $\vc(\hat{\alpha})\leq x(\hat{\epsilon})+\hat{\epsilon}$.  So:  $\rho_{\hat{\alpha}}(F)-\vc(\hat{\alpha})> x(\hat{\epsilon})-(x(\hat{\epsilon})+\hat{\epsilon})=-\hat{\epsilon}$, a contradiction as we assumed $-\hat{\epsilon}\geq \rho_{\alpha}(F)-\vc({\alpha})$ for every $\alpha\in(0,1)$.

\textbf{Part 3:} We show that statement (3) implies statement (1). Let $\rho\in P^\VC$. We show that $\rho$ satisfies the three axioms. Let $\vc\in\VC$ such that $\rho=\rho^\vc$.

This proof makes use of the fact that $\rho_\alpha$ satisfies join separability, affine equivariance, and lower semicontinuity for all $\alpha\in(0,1)$. That $\rho_\alpha$ satisfies affine equivariance is obvious. That $\rho_\alpha$ satisfies lower semicontinuity and join separability follows from Lemmas~\ref{lemmaa}~and~\ref{lemmab}.

To show that $\rho$ satisfies join separability, let $F,G\in\mathcal{F}$.  For $\alpha\in(0,1)$, the fact that $\rho_\alpha$ satisfies join separability implies that $\left(\rho_{\alpha}(F \joinfosd  G)-\vc(\alpha)\right)=\max_{E\in\{F,G\}}\left(\rho_{\alpha}(E)-\vc(\alpha)\right)$.
 Consequently, $\rho(F \joinfosd  G)=\sup_{\alpha\in(0,1)}\max_{E\in\{F,G\}}\left(\rho_{\alpha}(E)-c(\alpha)\right)=\max_{E\in\{F,G\}}\sup_{\alpha\in(0,1)}\left(\rho_{\alpha}(E)-\vc(\alpha)\right)=\max\{\rho(F),\rho(G)\}$.

To show that $\rho$ satisfies translation equivariance, let $b\in\mathbb{R}$ and $F\in\mathcal{F}$, and define $h(x)=x+b$. 
Because $\rho_\alpha$ satisfies affine equivariance, $\rho_{\alpha}(F\circ h^{-1})=\rho_{\alpha}(F)+b$.
Consequently, $\rho(F\circ h^{-1})=\sup_{\alpha\in(0,1)}\left(\rho_{\alpha}(F\circ h^{-1})-\vc(\alpha)\right) = \sup_{\alpha\in(0,1)}\left(\rho_{\alpha}(F)+b-\vc(\alpha)\right)=\sup_{\alpha\in(0,1)}\left(\rho_{\alpha}(F)-\vc(\alpha)\right)+b$. 
Consequently, $\rho(F\circ h^{-1})=\rho(F)+b$. 

To show that $\rho$ satisfies lower semicontinuity, let $F\in\mathcal{F}$, let $z\in\mathbb{R}$, and let $(F_n)$ be a sequence of CDFs  such that (a) $\rho(F_n)\leq z$ for all $n$ and (b) $F_n \rightarrow F$.  We show that $\rho(F)\leq z$.
Because $\rho(F_n) = \sup_{\alpha\in(0,1)} \rho_\alpha(F_n) - \vc(\alpha) \leq z$ for all $n$, it follows that $\rho_{\alpha}(F_n)\leq z+\vc(\alpha)$ for all $n$ and all $\alpha$. 
Because $\rho_\alpha$ satisfies lower semicontinuity, the facts that $F_n \rightarrow F$ and $\rho_{\alpha}(F_n)\leq z+\vc(\alpha)$ imply that $\rho_{\alpha}(F)\leq z+\vc(\alpha)$ and therefore that $\rho_{\alpha}(F)-\vc(\alpha)\leq z$.
Consequently, $\sup_{\alpha\in(0,1)}\left(\rho_{\alpha}(F)-\vc(\alpha)\right)=\rho(F)\leq z$.\end{proof}

\begin{proof}[Proof of Corollary~\ref{aecor}]
First, we show that quantiles satisfy affine equivariance. To see this, let $F\in\mathcal{F}$ and let $\rho_\alpha$ be a quantile for some $\alpha\in(0,1)$. Recall that $\rho(F)=\inf\{x:F(x)\geq\alpha\}$. Let $a>0$ and $b\in\mathcal{R}$ and define $h(x)=ax+b$. Then $\rho(F\circ h^{-1})=\inf\{x:(F\circ h^{-1})(x)\geq\alpha\}=\inf\{h(x):F(x)\geq\alpha\}=h(\rho(F))=a\rho(F)+b$.

Next, let $\rho$ satisfy the three axioms. Because $\rho$ satisfies affine equivariance, it satisfies transnslation equivariance. Thus there is $\vg\in\VG$ such that $\rho=\rho^\vg$. For $\alpha\in(0,1)$, define the function $\vg^\alpha$ such that $\vg^\alpha(x)=0$ for $x<0$, and $\vg^\alpha(x)=\alpha$ otherwise. By the definition, $\rho_\alpha(F)=\rho^{\vg^\alpha}(F)$. We show that there is $\alpha\in(0,1)$ such that $\vg=\vg^\alpha$.

Define $\alpha=\vg(0)$, and suppose by means of contradiction that $\vg\neq\vg^\alpha$. 
There are two cases. First, suppose $x'=\inf\{x:\vg(x)>0\}\neq 0$. For all $x\in\mathbb{R}$, $\rho(F_x)=x-x'$. Let $a>1$ and define $h(x)=ax$.  Note that $F_x\circ h^{-1}=F_{ax}$. By affine equivariance, $a \rho(F_x)=\rho(F_{ax})$. Then $0=a \rho(F_{x'})=\rho(F_{ax'})=ax'-x'\neq 0$, a contradiction. Because $\vg$ is upper-semicontinuous and bounded away from 1, it follows that $\alpha\in(0,1)$.

Next, suppose there exists $x'>0$ such that $\vg(x')>\alpha$. 
By upper semicontinuity of $\phi$, there exists $y<x'$ such that $y>0$ and $\vg(y)<\vg(x')$.
Let $\beta > 0$ such that $\beta y > x'$.
Define $F(x)\equiv\vg(x)$ for $x<3x'$, and $F(x)\equiv 1$ for $x\geq 3x'$.
Note that $F\in\mathcal{F}$.
Because $F(x) \geq \phi(x)$ for all $x$, it follows that $\rho(F)\leq 0$.
Next, let $h(x) = \beta x$. 
Because $F \circ h^{-1} (\beta y) < \phi(\beta y)$, it follows that $\rho(F \circ h^{-1}) > 0$.
But by affine equivariance, $\rho(F \circ h^{-1}) = h(\rho(F)) = \beta\rho(F) \leq 0$, a contradiction.
\end{proof}

\begin{proof}[Proof of Theorem~\ref{thm:dual}]
\textbf{Part 1.}
Consider the function $\vt:\mathcal{F}\rightarrow\mathcal{F}$ defined as follows.  Let $X$ be any random variable $X$ for which $F$ is the cdf, and let $\vt(F)\equiv F_{-X}$ be the cdf of $-X$.  Observe that by the Theorem 2.7 of \citet{billingsley} and Example 2.3 of the same, we have $F_n \rightarrow F$ if and only if $\vt(F_n) \rightarrow \vt(F)$.  Also observe that $\vt(F) \meetfosd \vt(G)=\vt(F\joinfosd G)$ {and $\vt(F) \joinfosd \vt(G)=\vt(F \meetfosd G)$}.\footnote{The lattice result follows from the fact that $F\geqfosd G$ if and only if $\vt(G)\geqfosd \vt(F)$.  For, suppose $F\geqfosd G$, and observe that $\vt(F)(x)=1-F(-x^-)\geq 1-G(-x^-)=\vt(G)(x)$, so that $\vt(G)\geqfosd \vt(F)$.  And suppose that $\vt(G)\geqfosd \vt(F)$, but that $F\geqfosd G$ is false.  Then there is $x$ for which $F(x)>G(x)$, and in particular by right continuity, there is $\varepsilon^* > 0$ and $a>b$ such that $F(x+\varepsilon)\geq a > b \geq G(x+\varepsilon)$ for all $0<\varepsilon<\varepsilon^*$.  Then $\vt(F)(-(x+\varepsilon^*))\leq 1-a<1-b \leq \vt(G)(-(x+\varepsilon^*))$, so that $\vt(G)$ does not first order stochastically dominate $\vt(F)$.}

Define $F(x^-)\equiv\sup_{y<x} F(y)$.  We show that $\vt(F)(x) = 1-F((-x)^-)$. As $\vt(F)$ is clearly non-decreasing and satisfies $\{0,1\}\subseteq\overline{F(\mathbb{R})}\subseteq [0,1]$, we only need to show that $\vt(F)$ is right continuous. Observe that $F(x^-)$ is left continuous, by definition; that is, for any $x$ and $\varepsilon<0$, by taking $y<x$ arbitrarily close to $x$, we may choose $F(y) \geq F(x)-\varepsilon$, and as $F$ is nondecreasing, the result follows.  Therefore, $F((-x)^-)$ is right continuous and so is $\vt(F)$. 

We show that $\tau\circ \tau$ is the identity map.
Let $F\in\mathcal{F}$ and $x\in\mathbb{R}$.  Then $\tau(\tau(F))(x) = 1-\sup_{z:z<-x}[1-\sup_{y:y<-z}F(y)]=\inf_{z:z<-x}\sup_{y:y<-z}F(y)=\inf_{z:z>x}\sup_{y:z>y}F(y)$, where the last equality follows as $z$ is a dummy variable.  Observe now that if $z>x$, $\sup_{y:y<z}F(y)\geq F(x)$, as $x<z$.  So, $\tau(\tau(F))(x)\geq F(x)$.  On the other hand, since $F$ is right  continuous, for every $\varepsilon > 0$, there is $z>x$ for which $F(z)<F(x)+\varepsilon$, so that for any such $z$, $\sup_{y:y<z}F(y)\leq F(x)+\varepsilon$, since $F$ is nondecreasing.  Therefore, $\inf_{z:z>x}\sup_{y:z>y}F(y)\leq F(x)+\varepsilon$, and we can conclude that $\tau(\tau(F))(x)=F(x)$.

\textbf{Part 2.}
Let $\rho$ satisfies meet separability, translation equivariance, and upper semicontinuity, and define $\rho^*(F)=-\rho(\vt(F))$. We show that $\rho^*$ satisfies join separability, translation equivariance, and lower semicontinuity, and that $\rho(F)=-\rho^*(\vt(F))$ for all $F$.

To show that $\rho^*$ satisfies join separability, let $F,G\in\F$.
Then $\rho^*(F \joinfosd G)=-\rho(\vt(F\joinfosd G))=-\rho(\vt(F)\meetfosd \vt(G))$, and $\rho^*(F)\vee\rho^*(G)=(-\rho(\vt(F)))\vee (-\rho(\vt(G)))=-(\rho(\vt(F))\wedge \rho(\vt(G)))$. Because $\rho$ satisfies meet separability, $\rho(\vt(F)\meetfosd \vt(G))=\rho(\vt(F))\wedge \rho(\vt(G))$. 
It follows that $\rho^*(F \joinfosd G)=\rho^*(F)\vee\rho^*(G)$

To show that $\rho^*$ satisfies translation equivariance, let $F\in\F$ and $b\in\mathbb{R}$, and define $g(x) = x+b$. Then $\vt(F\circ g^{-1})=\vt(F)\circ g$, and therefore $\rho^*(F\circ g^{-1})=-\rho(\vt(F\circ g^{-1}))=-\rho(\vt(F)\circ g)$. Because $\rho$ satisfies translation equivariance, it follows that $-\rho(\vt(F)\circ g)=-(\rho(\vt(F))-b)$ and therefore that $-(\rho(\vt(F))-b)=-\rho(\vt(F))+b=\rho^*(F)+b$. Thus $\rho^*(F\circ g^{-1})=-\rho^*(F)+b$.

To show that $\rho^*$ satisfies lower semicontinuity, let $x^*\in\mathbb{R}$ and let $F_n$ be a sequence such that $\rho^*(F_n) \leq x^*$ for all $n$ and such that $F_n \rightarrow F$.  
Because $x^* \geq \rho^*(F_n)=-\rho(\vt(F_n))$, it follows that $\rho(\vt(F_n))\geq -x^*$. Because $\rho$ satisfies upper semicontinuity, and because  $\vt(F_n)\rightarrow \vt(F)$, it follows that $\rho(\vt(F))\geq -x^*$. Consequently, $\rho^*(F)=-\rho(\vt(F))\leq x^*$.

Lastly, we show that $\rho(F)=-\rho^*(\vt(F))$. By the definition $\rho^*(F)=-\rho(\vt(F))$. Consequently, $-\rho^*(\vt(F))=\rho(\vt(\vt((F)))$. Because $F=\vt(\vt(F))$ for all $F$, it follows that $-\rho^*(\vt(F))=\rho(F)$.

\textbf{Part 3.}
First we prove that statement (1) implies statement (2). Let $\rho$ satisfies the three axioms. Consequently, $\rho^*$ satisfies join separability, translation equivariance, and lower semicontinuity. By theorem~\ref{3way}, it follows that there is a $\vg\in\VG$ for which $\rho^*=\rho^{\vg}$. Because, for all $F$, $\rho(F)=-\rho^*(\vt(F))$, it follows that $\rho(F)=-\rho^*(\vt(F))=-\min\{t:\vt(F)(x)\geq \vg(x-t)\textnormal{ for all }x\}$. 

Because $-\min\{t:\vt(F)(x)\geq \vg(x-t)\textnormal{ for all }x\}=\max\{-t:\vt(F)(x)\geq \vg(x-t)\textnormal{ for all }x\}$, and because $\vt(F)=1-F((-x)^-)$, it follows that  $\rho(F)=\max\{t:1-F((-x)^-)\geq \vg(x+t)\textnormal{ for all }x\}$.
That $1-F((-x)^-)\geq \phi(x+t)$ for all $x$ obviously implies that $1-\phi(x+t)\geq F((-x)^-)$ for all $x$.  Working with $x$ instead of $-x$, we have that  
\begin{equation}\label{eq:rho}\rho(F)=\max\{t:1-\vg(t-x)\geq F(x^-)\textnormal{ for all }x\}.\end{equation}

Let $\vh_\vg:\mathbb{R}\rightarrow\mathbb{R}$ be defined as $\vh_\vg(x) = 1 - \sup_{y:y<-x}\vg(y)$. We show that $\vh_\vg\in\VH$.
To show that $\vh_\vg$ is non-decreasing, let $x,y\in\mathbb{R}$ such that $x\geq y$. Then $-x\leq-y$ implies that $\{z:z<-x\}\subseteq\{z:z<-y\}$ and therefore that $\vh_\vg(x) = 1-\sup_{z:z<-x}\vg(z) \geq 1-\sup_{z:z<-y}\vg(z) =\vh_\vg(y)$. 
{To show that $\vh_\vg$ is right-continuous, note that, by construction, $\sup_{y:y<x} \vg(x)$ is left-continuous, and therefore that $\sup_{y:y<x} \vg(-x)=\sup_{y:y<-x} \vg(x)$ is right continuous. Consequently, $1-\sup_{y:y<-x} \vg(x)=\vh_\vg(x)$ is right continuous.}
To show that there exists $x$ such that $\vh_\vg(x)=1$, note that there exists $z$ such that $x<z$ implies that $\vg(x)=0$. Then $\vh_\vg(-z) = 1 - \sup_{y:y<z}\vg(y) = 1$. 
To show that there exists $x$ such that $\vh_\vg(x)<1$, note that there exists $z$ such that $x\geq z$ implies that $\vg(x)>0$. Let $w>z$. Then $\vh_\vg(-w) = 1 - \sup_{y:y<w}\vg(y) = 1$. Because $z<w$, $\sup_{y:y<w}\vg(y) \geq \vg(z)$, which implies that $\vh_\vg(-w) = 1 - \sup_{y:y<w}\vg(y) < 1$. 
To show that $\lim_{x\rightarrow-\infty}\vh_\vg(x)>0$, note that $\lim_{x\rightarrow-\infty}\sup_{y:y<-x}\vg(y)=\lim_{x\rightarrow\infty}\sup_{y:y<x}\vg(y)=\lim_{x\rightarrow\infty}\vg(x)<1$. Then  $\lim_{x\rightarrow-\infty}\vh_\vg(x) = 1 - \lim_{x\rightarrow\infty}\vg(x)>0$.

Finally, we claim that $\vh_\vg(x-t)\geq F(x)$ for every $x$ if and only if $1-\vg(t-x)\geq F(x^-)$ for every $x$.

First, let $1-\vg(t-x)\geq F(x^-)$ for every $x$.   Then for every $z>x$, we know that $1-\vg(t-z)\geq F(z^-)\geq F(x)$.  Consequently, for every $y>z$, we also have $1-\vg(t-y)\geq F(x)$ as $y>z$ implies $y>x$.  Note that $\vh_\vg(x) = 1 - \sup_{y:y<-x}\vg(y) = \inf_{y:y<-x}(1-\vg(y))$.
Because $t-y<t-z$, it follows that $\vh_\vg(z-t)=\inf_{t-y:t-y<t-z}(1-\vg(t-y))\geq F(x)$.  Finally, since $\vh_\vg$ is right continuous, we establish that $\vh_\vg(x-t)\geq F(x)$.

Conversely, suppose by means of contradiction  that $\vh_\vg(x-t)\geq F(x)$ for every $x$ but that there is an $x$ for which $1-\vg(t-x) < F(x^-)$.  Let $\varepsilon > 0$.  Then there is $z<x$ for which for all $y\in [z,x)$, we have $F(y)\geq F(x^-)-\varepsilon$.  By choosing  $\varepsilon>0$ small enough, we get a pair of numbers $a,b$ for which for all $y\in [z,x)$, we have $1-\vg(t-y)\leq 1-\vg(t-x)\leq a < b \leq F(y)$, where the first inequality is because $\vg$ is non-decreasing.  Now, for any $y\in (z,x)$, we have $1-\vg(t-z)\leq 1-\vg(t-y)\leq a$ by nondecreasingness.  Consequently, $\inf_{t-y:t-y<t-z}(1-\vg(t-y))\leq a$.
However, $\vh_\vg(z-t)  = \inf_{t-y:t-y<t-z}(1-\vg(t-y))\geq  F(z) \geq b > a$, a contradiction.

We may conclude, using this fact and equation~\eqref{eq:rho}, that $$\rho(F)=\max\{t:F(x) \leq \vh_\vg(x-t)\textnormal{ for all }x\}.$$

\textbf{Part 4.}
Next we prove that statement (1) implies statement (3). Let $\rho$ satisfy the three axioms. Consequently, $\rho^*$ satisfies join separability, translation equivariance, and lower semicontinuity. By theorem~\ref{3way}, it follows that there is a $\vc\in\VC$ for which $\rho^*=\rho^{\vc}$. 

Because, for all $F$, $\rho(F)=-\rho^*(\vt(F))$, it follows that $\rho(F)=-\sup_{\alpha\in(0,1)}\left(\rho_{\alpha}(\vt(F))-c(\alpha)\right)$. 
It follows that $\rho(F)=\inf_{\alpha\in(0,1)}\left(c(\alpha)-\rho_{\alpha}(\vt(F))\right)$.
By the definition of $\rho_{\alpha}$ and because $\vt(F)(x) = 1-F((-x)^-)$, it follows that $\rho(F)=\inf_{\alpha\in(0,1)}\left(c(\alpha)-\inf\{x\in\mathbb{R}:1-F((-x)^-)\geq\alpha\}\right)$.

Rearranging, we have $\rho(F)=\inf_{\alpha\in(0,1)}\left(c(\alpha)+\sup\{-x\in\mathbb{R}:1-F((-x)^-)\geq\alpha\}\right)$, and therefore that $\rho(F)=\inf_{\alpha\in(0,1)}\left(c(\alpha)+\sup\{x\in\mathbb{R}:F(x^-)\leq 1-\alpha\}\right)$.

Define $\vd(\vc)(\alpha)\equiv -\vc(1-\alpha)$. We show that $\vd(\vc)\in\VD$. That $\vd(\vc)$ is non-decreasing follows from the fact that $\vc(1-\alpha)$ is non-increasing. To show that $\vd(\vc)$ is upper semicontinuous, note that $\vc(\alpha)$, and consequently $\vc(1-\alpha)$ are lower semicontinuous. To show that there is  $b\in(0,1)$ such that $\vd(\vc)=-\infty$, note that there exists $b'\in(0,1)$ such that $\vc(b')=\infty$. This implies that $1-b\in(0,1)$ and that $\vd(\vc)(1-b')=-\infty$.

By substituting $\vd(\vc)$ for $\vc$ we get $\rho(F)=\inf_{\alpha\in(0,1)}\left(\sup\{x\in\mathbb{R}:F(x^-)\leq \alpha\}-\vd(\vc)(\alpha)\}\right)$.
To show that $\rho(F)=\inf_{\alpha\in(0,1)}\left(\rho_\alpha^+(F)-\vd(\vc)(\alpha)\}\right)$, it is sufficient to establish that  that $\sup\{x\in\mathbb{R}:F(x^-)\leq \alpha\}= \sup\{x\in\mathbb{R}:F(x^)\leq \alpha\}$.

Observe that $F(x^-)=\sup_{y<x}F(y)\leq F(x)$, as $F$ is nondecreasing.  So $F(x) \leq \alpha$ implies $F(x^-)\leq \alpha$, whereby $\sup\{x:F(x^-)\leq \alpha\}\geq \sup\{x:F(x)\leq \alpha\}$.  Since $F(x^-)$ is left continuous and nondecrasing, it is lower semicontinuous, and $\sup\{x:F(x^-)\leq \alpha\}$ is actually achieved at a maximum, say $x^*$.  By definition of $F((x^*)^-)$, for any $y<x^*$, we have $F(y) \leq \alpha$.  Consequently, $\sup\{x:F(x)\leq\alpha\}\geq x^*$, establishing the other direction.

\textbf{Part 5.}
Let $\rho\equiv\rho^\vh$ for some $\vh\in\VH$. We show that $\rho$ satisfies meet separability, translation equivariance, and upper semicontinuity.

To show that $\rho$ satisfies meet separability, let $F,G\in\F$, and assume that $\rho(F)\leq\rho(G)$. Let $z=\rho(F)$.
Then $z=\max\{t:F(x)\leq \vh(x-t)\textnormal{ for all }x\}\leq\max\{t:G(x)\leq \vh(x-t)\textnormal{ for all }x\}$, which implies that $F(x)\leq \vh(x-z)$ and $G(x)\leq \vh(x-z)$ for all $x$. Consequently $(F\meetfosd G)(x)\leq \vh(x-z)$ for all $x$, which implies that $\rho(F\meetfosd G)\geq z$.
Let $\varepsilon>0$. Then $z+\varepsilon>\rho(F)$ which implies that, for some $x$, $F(x) > \vh(x-(z+\varepsilon))$. By construction, $(F\meetfosd G)(x)\geq F(x)$ for all $x$, which implies that $(F\meetfosd G)(x) > \vh(x-(z+\varepsilon))$ for some $x$, and therefore that $\rho(F\meetfosd G)\leq z+\varepsilon$. Since $\varepsilon$ was arbitrary, this implies that $\rho(F\meetfosd G) = z = \min\{\rho(F),\rho(G)\}$.

To show that $\rho$ satisfies translation equivariance, let $b\in\mathbb{R}$ and $F\in\F$, and define $h(x)=x+b$. Then $\rho(F\circ h^{-1}) =
\max\{t:(F\circ h^{-1})(x)\leq \vh(x-t)\textnormal{ for all }x\} =
\max\{t:F(x)\leq \vh(h(x)-t)\textnormal{ for all }x\} =
\max\{t+b:F(x)\leq \vh(x-t)\textnormal{ for all }x\} =
\max\{t:F(x)\leq \vh(x-t)\textnormal{ for all }x\} + b =
\rho(F) + b$.

 To see that $\rho$ satisfies upper semicontintuity, let $F\in\mathcal{F}$, $z\in\mathbb{R}$, and $\{F_n\}\subseteq \mathcal{F}$ for which $F_n \rightarrow F$.  Suppose that $\rho(F_n)\geq z$ for all $n$.  We claim that $\rho(F)\geq z$.  To this end, observe that for all $n$ and all $x\in\mathbb{R}$, $F_n(x)\leq \phi(x-z)$.  Now, for all continuity points $y$ of $F$, we have $F_n(y)\rightarrow F(y)$.  Consequently for all continuity points $y$ of $F$, $F(y)\leq \phi(y-z)$.  Because $F$ is nondecreasing, the set of continuity points is dense, and so we may conclude that $F(x) \leq \phi(x-z)$ for all $x$ as each of $F$ and $\phi$ are right continuous.  Therefore, $\rho(F)\geq z$.

\textbf{Part 6.}
We first show that the upper quantile $\rho^+_\alpha$ satisfies meet separability, translation equivariance, and upper semicontinuity.

To see that $\rho^+_\alpha$ satisfies meet separability, let $F,G\in\F$ such that $\rho^+_\alpha(F)\geq\rho^+_\alpha(G)$, and let $z=\rho^+_\alpha(G)$. 
Because $z=\rho^+_\alpha(G)$ it follows that $z=\sup\{x\in\mathbb{R}:G(x)\leq\alpha\}$.
This implies that, for all $\varepsilon>0$, $G(z+\varepsilon)>\alpha$, and consequently that
$(F\meetfosd G)(z+\varepsilon)>\alpha$.
Therefore, $\rho^+_\alpha(F\meetfosd G)\leq z$.
Next, for all $\varepsilon>0$, 
$z-\varepsilon<\rho^+_\alpha(G)\leq\rho^+_\alpha(F)$ which implies that $G(z-\varepsilon)\leq\alpha$ and $F(z-\varepsilon)\leq\alpha$. It follows that $(F\meetfosd G)(z-\varepsilon)\leq\alpha$, and therefore that $z\leq\rho^+_\alpha(F\meetfosd G)$.  Thus  $\rho^+_\alpha(F\meetfosd G)= z = \min\{\rho^+_\alpha(F),\rho^+_\alpha(G)\}$.

To see that $\rho^+_\alpha$ satisfies translation equivariance, let $b\in\mathbb{R}$ and $F\in\F$, and define $h(x)=x+b$. Then $\rho^+_\alpha(F\circ h^{-1}) = \sup\{x \in \mathbb{R}: (F\circ h^{-1})(x) \leq \alpha\} = \sup\{h(x) \in \mathbb{R}: F(x) \leq \alpha\}=h(\sup\{x \in \mathbb{R}: F(x) \leq \alpha\})=h(\rho^+_\alpha(F))=\rho^+_\alpha(F)+b$.

{That $\rho^+_\alpha$ satisfies upper semicontinuity follows from \citet[Theorem 1]{cpc:2007}. To see this, note that $\sup\{x\in\mathbb{R}:F(x)\leq\alpha\}=\inf\{x\in\mathbb{R}:F(x)>\alpha\}$.}

Next, let $\rho\equiv\rho^\vd$ for some $\vd\in\VD$. We show that $\rho$ satisfies meet separability, translation equivariance, and upper semicontinuity.

To show that $\rho$ satisfies meet separability, let $F,G\in\F$.  
For $\alpha\in(0,1)$, the fact that $\rho^+_\alpha$ satisfies meet separability implies that $\left(\rho^+_{\alpha}(F \meetfosd G)-\vd(\alpha)\right)=\min_{E\in\{F,G\}}\left(\rho^+_{\alpha}(E)-\vd(\alpha)\right)$.
 Consequently, $\rho(F \meetfosd G)=\inf_{\alpha\in(0,1)}\min_{E\in\{F,G\}}\left(\rho^+_{\alpha}(E)-\vd(\alpha)\right)=\min_{E\in\{F,G\}}\inf_{\alpha\in(0,1)}\left(\rho^+_{\alpha}(E)-\vd(\alpha)\right)=\min\{\rho(F),\rho(G)\}$.

To show that $\rho$ satisfies translation equivariance, let $b\in\mathbb{R}$ and $F\in\mathcal{F}$, and define $h(x)=x+b$. 
Because $\rho_\alpha^+$ satisfies translation equivariance, $\rho^+_{\alpha}(F\circ h^{-1})=\rho^+_{\alpha}(F)+b$.
Consequently, $\rho(F\circ h^{-1})=\inf_{\alpha\in(0,1)}\left(\rho^+_{\alpha}(F\circ h^{-1})-\vd(\alpha)\right) = \inf_{\alpha\in(0,1)}\left(\rho^+_{\alpha}(F)+b-\vc(\alpha)\right)=\inf_{\alpha\in(0,1)}\left(\rho^+_{\alpha}(F)-\vd(\alpha)\right)+b$. 
Consequently, $\rho(F\circ h^{-1})=\rho(F)+b$. 

Upper semicontinuity of $\rho$ follows from the fact that a pointwise infimum of upper semicontinuous functions is itself upper semicontinuous, together with the fact that each $\rho_{\alpha}^+$ is upper semicontinuous.  See \emph{e.g.} \citet{berge}, Theorem 3 of p. 76 for the result framed in terms of lower semicontinuity.
\end{proof}

\begin{proof}[Proof of Theorem~\ref{thm:ordinal}] That the axioms must be satisfied by any function $\rho$ satisfying the properties of Theorem~\ref{thm:dual} is straightforward.  We establish the converse here.

We first establish that for every $F\in \F$, there is $x\in\mathbb{R}$ for which $F_x \succeq F$.  Observe first that by ordinal meet separability, if $F \geqfosd G$, then $F \succeq G$.  Second, observe that if $x>y$, then $F_x \succ F_y$; for, suppose not, then by completeness, $F_y \succeq F_x$.  By ordinal translation invariance $F_{y-x}\succeq F_0$.  By ordinal translation invariance again, for any $n\in\mathbb{N}$, $F_{n(y-x)}\succeq F_{(n-1)(y-x)}$.  Then by transitivity, for any $n\in\mathbb{N}$, $F_{n(y-x)}\succeq F_0$.  Observe also that by lower unboundedness, there is $z\in\mathbb{R}$ for which $F_0 \succ F_z$.  Thus for any $n\in\mathbb{N}$, $F_{n(y-x)}\succ F_z$, by weak order.  For $n$ large this contradicts the monotonicity property ($F \geqfosd G$ implies $F\succeq G$).

In particular, for every $F$ for which there exists $x\in\mathbb{R}$ for which $F(x) = 1$, we know that $F_x \geqfosd G$, and so $F_x \succeq F$.  Suppose now that there is no such $x\in\mathbb{R}$, so that $F(x)<1$ for every $x\in\mathbb{R}$.

We want to show that there is $x\in\mathbb{R}$ for which $F_x \succeq F$; by means of contradiction and invoking completeness, suppose that for all $x\in\mathbb{R}$,  $F \succ F_x$.

By ordinal meet-separability, for every $x\in\mathbb{R}$, $F_x \sim (F_x \meetfosd F)$.  By ordinal translation invariance, for every $x\in\mathbb{R}$, $F_0 \sim (F_0 \meetfosd (F-x))$.  Now, consider $(F_0 \meetfosd F-n)\joinfosd F_{-1}$; because $(F_0 \meetfosd F-n)\joinfosd F_{-1}\geqfosd (F_0\meetfosd(F-n))$, we obtain by transitivity that $(F_0 \meetfosd F-n)\joinfosd F_{-1} \succeq F_0$.  Now observe that as $n\rightarrow \infty$, we have $(F_0\meetfosd F-n)\joinfosd F_{-1}\rightarrow F_{-1}$.  Conclude by ordinal upper semicontinuity that $F_{-1} \succeq F_0$, but we have shown that $F_0 \succ F_{-1}$ as $0 > -1$, a contradiction.

Apply Rader's representation Theorem \citep{rader1963,richter1980} to find an upper semicontinuous function $V:\F\rightarrow\mathbb{R}$ representing $\succeq$.  That this is possible follows because the set of probability distributions on $\mathbb{R}$ endowed with the topology of weak convergence has a countable base---it is a separable space (Theorem 15.12 of \citet{ab:2006}), which implies that it has a countable base (p. 237 of \citet{billingsley}).

For any $F\in\F$, we know that there are $\overline{x}$,$\underline{x}\in\mathbb{R}$ for which $F_{\overline{x}}\succeq F \succ F_{\underline{x}}$; the lower bound follows from lower unboundedness.  We claim that there is $x\in\mathbb{R}$ for which $F\sim F_{x}$.

To see this, let $x^* \equiv \inf\{x:F_x \succeq F\}$.  The preceding bounds imply that $x^*$ exists and is real-valued.  Furthermore, by ordinal upper semicontinuity, we have $F_{x^*} \succeq F$.  We claim that $F_{x^*}\sim F$.  Suppose by means of contradiction that $F_{x^*}\succ F$ and observe that this implies that $V(F_{x^*})>V(F)$.  Further by definition of $x^*$, for any $y<x^*$, we have $V(F)>V(F_y)$.  Now, let $b,b'\in\mathbb{R}$ for which $b'<b$.  Observe that $V(F_{x^*})>V(F)>V(F_{x^*+(b'-b)})$.  By ordinal translation invariance, $V(F_{x^*}+b)>V(F+b)>V(F_{x^*}+b')>V(F+b')$.  For each $b\in \mathbb{R}$, let $a(b)\in \mathbb{Q}\cap (V(F+b),V(F_{x^*}+b))$.\footnote{This does not require the axiom of choice.}  Because $b> b'$ implies $(V(F+b),V(F_{x^*}+b))\cap (V(F+b'),V(F_{x^*}+b'))=\varnothing$, we have constructed a one to one map from $\mathbb{R}$ into $\mathbb{Q}$, which is an obvious contradiction.

Now, for each $F$, define $\rho(F)=x$, where $F \sim F_x$.  Clearly this choice of $x$ is unique; as if $x>y$, $F_x \succ F_y$.  That $\rho$ represents $\succeq$ follows from the fact that $F_x \succeq F_y$ iff $x \geq y$ and the fact that $F_{\rho(F)}\sim F \succeq G\sim F_{\rho(G)}$.\footnote{That $F_x \succeq F_y$ iff $x \geq y$ follows as $x \geq y$ implies $F_x \geqfosd F_y$, so that $F_x \succeq F_y$ and $x > y$ implies $F_x \succ F_y$.}  We claim that $\rho$ satisfies the properties of Theorem~\ref{thm:dual}.  Meet-separability is obvious.  Translation invariance follows as by ordinal translation invariance, we have $F_{\rho(F)}\sim F$ implies that $F_{\rho(F)+b}=F_{\rho(F)}+b\sim F+b$.  Finally upper semicontinuity follows thusly.  Suppose $\rho(F_n)\geq x^*$ and that $F_n \rightarrow F$.  Observe that by translation invariance there is $G$ for which $\rho(G)=x^*$.  Consequently, $F_n \succeq G$ as $\rho$ represents $\succeq$.  By ordinal upper semicontinuity, $F \succeq G$, which implies that $\rho(F)\geq \rho(G)=x^*$.  \end{proof}

\bibliographystyle{ecta}
\bibliography{quantilebib}

\end{document}